\providecommand{\keywords}[1]{\textbf{\textit{Keywords:}} #1}
\newtheorem{assumption}{Assumption}
\newtheorem{thm}{Theorem}
\newtheorem{cor}{Corollary}
\newtheorem{lem}[thm]{Lemma}
\newcommand\scalemath[2]{\scalebox{#1}{\mbox{\ensuremath{\displaystyle #2}}}}
\renewcommand\@biblabel[1]{}
\renewenvironment{thebibliography}[1]
     {\section*{\refname}%
      \@mkboth{\MakeUppercase\refname}{\MakeUppercase\refname}%
      \list{}%
           {\leftmargin0pt
            \@openbib@code
            \usecounter{enumiv}}%
      \sloppy
      \clubpenalty4000
      \@clubpenalty \clubpenalty
      \widowpenalty4000%
      \sfcode`\.\@m}
     {\def\@noitemerr
       {\@latex@warning{Empty `thebibliography' environment}}%
      \endlist}
\tikzset{
  mymx/.style={matrix of math nodes,nodes=myball,column sep=4em,row sep=-1ex},
  myball/.style={draw,circle,inner sep=0pt},
  mylabel/.style={midway,sloped,fill=white,inner sep=1pt,outer sep=1pt,below,
    execute at begin node={$\scriptstyle},execute at end node={$}},
  plain/.style={draw=none,fill=none},
  sel/.append style={fill=green!10},
  prevsel/.append style={fill=red!10},
  route/.style={-latex,thick},
  selroute/.style={route,blue!50!green}
}
\tikzset{   cirwhite/.style={draw=gray,circle,fill=white,minimum size=1pt,inner sep=2pt,line width=0.2mm},
            cirred/.style={draw=gray,circle,fill=red,minimum size=1pt,inner sep=2pt,line width=0.2mm},
}
\newenvironment{remark}[1][Remark]{\begin{trivlist}
\item[\hskip \labelsep {\bfseries #1}]}{\end{trivlist}}
\author[1]{Konul Mustafayeva}
\author[2]{Weining Wang}
\affil[1]{\footnotesize Department of Mathematics, King's College London, konul.mustafayeva@kcl.ac.uk}
\affil[2]{\footnotesize Department of Economics, City University London, weining.wang@city.ac.uk}
\date{\vspace{-5ex}}
 \title{Non-Parametric Estimation of Spot Covariance Matrix with High-Frequency Data}
\begin{document}
\maketitle  

\begin{abstract}
Estimating spot covariance is an important issue to study, especially with the increasing availability of high-frequency financial data. We study the estimation of spot covariance using a kernel method for high-frequency data. In particular, we consider first the kernel weighted version of realized covariance estimator for the price process governed by a continuous multivariate semimartingale. Next, we extend it to the threshold kernel estimator of the spot covariances when the underlying price process is a discontinuous multivariate semimartingale with finite activity jumps. We derive the asymptotic distribution of the estimators for both fixed and shrinking bandwidth. The estimator in a setting with jumps has the same rate of convergence as the estimator for diffusion processes without jumps. A simulation study examines the finite sample properties of the estimators. In addition, we study an application of the estimator in the context of covariance forecasting. We discover that the forecasting model with our estimator outperforms a benchmark model in the literature.
\end{abstract}
\keywords{high-frequency data; kernel estimation; jump; forecasting covariance matrix}
MOS subject classification: 62F12, 62G05, 60J75.
\section{Introduction}
Spot covariance has important applications in studying the intraday patterns of the covariance process, co-jump tests  (Bibinger and Winkelmann \citeyearpar{BibingerWinkelmann}) and estimating
parametric multivariate stochastic volatility models (Kanaya and Kristensen \citeyearpar{Kanaya}). Moreover, understanding covariance dynamics is crucial for effective portfolio choice, derivative pricing, and risk management.  
The availability of high-frequency intraday data of asset returns has given rise to several approaches for estimating integrated (co)variances and spot variances. While the literature proposes few measures of integrated covariance, see e.g. Barndorff-Nielsen and Shephard \citeyearpar{barndorff2004a}, Hayashi and Yoshida \citeyearpar{HayashiYoshida},
there is sparse literature on empirical approaches and statistical theory to estimate spot covariances with high-frequency data.

In this paper, we consider the nonparametric filtering of spot covariance with high-frequency financial data. Our study is at the intersection of two fields of literature. The first strand of literature is on estimating integrated covariance matrices over a fixed period. This topic has been studied extensively in high-frequency econometrics. For example, the highly celebrated paper by Barndorff-Nielsen and Shephard \citeyearpar{barndorff2004a} makes important contributions to the use of realized covariance to estimate integrated covariance matrix in a setup without market microstructure noise. The quasi-maximum likelihood estimator by A\"{i}t-Sahalia et al. \citeyearpar{SahaliaFanXiy}, the multivariate pre-averaging estimator by Christensen et al. \citeyearpar{ChristensenPodolskijVetter}, the two-scale estimator by Zhang \citeyearpar{Zhang} are robust to microstructutre noise.  However, all above mentioned realized covariance estimators do not account for jumps in the underlying price process.

The second strand focuses on spot volatility estimation. Several approaches of estimating spot volatility were proposed. Foster et al. \citeyearpar{FosterNelson} were the first to introduce the spot volatility estimator: rolling and sampling filters. Later, kernel-type estimators were introduced in Fan and Wang \citeyearpar{FanWang} and Kristensen \citeyearpar{Kristensen}.  These estimators of spot variance neglect the microstructure noise and jumps. The examples of spot variance estimators accounting for microstructure noise include Zu and Boswijk \citeyearpar{ZuYang}, Bos et al. \citeyearpar{Bos}, Mykland and Zhang \citeyearpar{MyklandZhang}. Yu et al. \citeyearpar{YuFang} extend kernel spot volatility estimator of Kristensen \citeyearpar{Kristensen} to the case when the underlying price process has jumps.

The estimation of spot covariance matrix is, however, an area that has been studied the least. For a multi-dimensional continuous semimartingale log-asset price process  Bibinger et al. \citeyearpar{Bibinger} propose an estimator for spot covariance which is constructed based on a local average of block-wise parametric spectral covariance estimates. Aiming to fill this gap in the literature we study the spot covariance estimation of both continuous and discontinuous semimartingales.   

Our contribution is following. First, for a setup without jumps, we study asymptotic properties of the kernel covariance estimator, which was mentioned in Kristensen \citeyearpar{Kristensen} as an extension to the multivariate case and was left for the future research. Second, we propose the threshold kernel covariance estimator when the underlying price process is a discontinuous semimartingale with finite activity jumps. We derive the asymptotic distribution of this estimator for a fixed bandwidth. The estimator is an extension to the multivariate case of the threshold kernel volatility estimator proposed by Yu et al. \citeyearpar{YuFang}. Third, we conduct numerical studies to examine finite sample properties of both estimators. Next, we study an application of the kernel estimator in the context of covariance forecasting.

In a setup without jumps the estimator is a kernel-weighted version of the standard integrated covariance estimator, which depends on a kernel function and choice of bandwidth. It can be regarded as a kernel regression in the time domain. The bandwidth choice allows us to focus on the covariance behavior at specific points in time, and give different weights to the covariance matrix over the window used. As the bandwidth shrinks to zero, the spot covariance can be extracted. We establish asymptotic normality of the estimator for both fixed and shrinking bandwidth. The proofs are component-wise. We construct our proofs referring to the techniques of Barndorff-Nielsen and  Shephard \citeyearpar{barndorff2004a} and Kristensen \citeyearpar{Kristensen}. We first derive the mean and covariance of the estimator. We then derive the asymptotic distribution by employing central limit theorem for triangular arrays and Cram\'{e}r-Wold device. We also prove the asymptotic normality for the threshold kernel estimator with fixed bandwidth. In the proof of this theorem we combine our results from the first theorem, techniques from Yu et al. \citeyearpar{YuFang} and employ Cram\'{e}r-Wold device. In simulation study we examine the finite sample properties of both estimators using the integrated mean square error and the integrated bias performance measurements.

The rate of convergence of both estimators is $\sqrt{n}$. The local method of moments estimator of the spot covariances of Bibinger et al. \citeyearpar{Bibinger} attains slower optimal rate of convergence ($\sqrt{n^4}$). However, it should be noted this is due to the fact that Bibinger et al. \citeyearpar{Bibinger} consider the setting with market microstructure noise, whereas we target for a complementary jump case. The kernel and threshold kernel covariance estimators are fairly easy to implement. 

In terms of applications of this kernel covariance estimator, considerable efforts has been put into covariance forecasting, see e.g. Alexander \citeyearpar{Alexander}, Andersen et al. \citeyearpar{AndersenBollerslevChristoffersen}. Multivariate GARCH models are a standard tool used in modelling and forecasting covariances. However, more recent studies propose models based on high-frequency data and options implied data. In a comprehensive empirical study by Symitsi et al. \citeyearpar{Symitsi} several approaches to the covariance forecasting are compared based on statistical and economic criteria. In this study the authors conclude that models based on high-frequency data offer a clear advantage in terms of statistical accuracy. In particular, a Vector Heterogeneous Autoregressive (VHAR) model achieves the best performance amongst the competing models. The VHAR model is a linear combination of past daily, weekly and monthly realized covariance estimators of Barndorff-Nielsen and Shephard \citeyearpar{barndorff2004a}. 

Motivated by this we use the VHAR model to forecast covariance, however instead of the realized covariance estimator we use newly proposed kernel covariance estimator. We further show that with the VHAR model the kernel covariance estimator outperforms the benchmark realized covariance estimator in all three measures of accuracy: the Euclidean loss function, the Frobenius distance and the multivariate quasi-likelihood loss function.

The paper is structured as follows. In Section \ref{sec:Setup_Estimators_covariance} we review theoretical setup of the problem and the kernel covariance estimator which was proposed in Kristensen \citeyearpar{Kristensen} and left for the future research. In Section \ref{sec:AsymProp} we study the asymptotic properties of the estimator for a fixed and small (tending to zero) bandwidth. In Section \ref{sec:JumpCaseEstimator} we introduce the setup with jumps, propose the estimator for jump case and derive its asymptotic distribution. In Section \ref{sec:simStudy} we conduct Monte Carlo simulations and investigate the finite sample properties of both estimators. In Section \ref{sec:Applications} we present an application of the estimator in the context of covariance forecasting. Finally, in Section \ref{sec:Conclusion} we summarise our findings.

\section{Kernel Covariance Estimation }\label{sec:Estimators_covariance}
\subsection{Theoretical Setup and the Kernel Covariance Estimator}\label{sec:Setup_Estimators_covariance}
In this section we start by considering a multidimensional continuous semimatingale, describe the theoretical setup and review the kernel covariance estimator in Kristensen \citeyearpar{Kristensen}. Our aim is to accurately estimate the spot covariance matrix of a fixed $d$-dimensional log-price process $\left(X(t)\right)_{t\geq 0}=
\left(X_{1}(t), X_{2}(t), ..., X_{d}(t)\right)_{t\geq 0}$. We assume that $X(t)$ follows a continuous semimartingale
\begin{equation}
X(t)= X(0)+ \int_0^t\mu(s)ds + \int_0^t\theta(s)dW(s), \ \ \ t\in \left[0,T\right], \label{stochasticProcess}
\end{equation}
defined on a filtered probability space $(\Omega,\mathcal{F},(\mathcal{F})_{t\geq 0},P)$, with an initial condition $X(0)\in\mathbb{R}^d$, the drift vector $\mu(t)$, the $d$-dimensional standard Brownian motion $W(t)$ and the instantaneous volatility matrix $\theta(t)$ which has elements that are all c\`{a}dl\`{a}g. The latter yields the $(d\times d)$-dimensional \textit{spot covariance matrix} $\Sigma(t) = \theta(t)\theta(t)^\top$, which is our object of interest. We also denote the \textit{integrated covariance matrix} by $\Sigma^*(t)=\int_0^t\Sigma(s)ds$. We consider the finite and fixed time horizon $[0,T]$ with $n+1$ high-frequency discrete observations $X_k(t_0),X_k(t_1),...,X_k(t_{n-1}),X_k(t_n)$ of the realization of $k$-th asset, with $k=1,2,...,d$. For an arbitrary partition $0=t_0<t_1<...<t_n=T$ of the interval $[0,T]$ we require that $\max_{i=1,...,n}\vert t_i - t_{i-1}\vert$ approaches zero under the asymptotic limit. For simplicity, we consider the case of equally spaced and synchronous observation times. We denote $\delta=T/n$, so that $t_i=i\delta$ for $i=1,2,\cdots,n$.

A kernel is a non-negative integrable function $K$ satisfying the following condition: $\int_{\mathbb{R}}K(u)du=1$. The kernel weighted measure of the integrated covariance, which is an extension of the measure of the integrated variance introduced in Kristensen \citeyearpar{Kristensen}, is of the following form
\begin{equation}
KCV(\tau)=\int_0^TK_h(s-\tau)\Sigma(s)ds, \label{integratedCovariance}
\end{equation} 
where the function $K_h(z)$ is given by $K\left(\frac{z}{h}\right)/h$, satisfies $\int_{\mathbb{R}}K(z)dz=1$, and $h>0$ is the fixed bandwidth. $KCV(\tau)$ delivers a kernel weighted average of the quadratic covariation.

An estimator of the integrated covariance in equation \eqref{integratedCovariance} is the kernel smoothed sample average of the increments, which was mentioned in Kristensen \citeyearpar{Kristensen} as an extension of the univariate case and was left for the future research:
\begin{equation}
\widehat{KCV}(\tau)=\sum_{i=1}^nK_h(t_{i-1}-\tau)\Delta X(t_{i-1})\Delta X^\top(t_{i-1}), \label{eq:KCV_estimator}
\end{equation}
where $\Delta X(t_{i-1})=X(t_i) - X(t_{i-1})$ is the $d$-dimensional vector ($d$ is fixed) of the increments of the process $X$ over time interval $[t_{i-1},t_i]$.
As demonstrated above, for a fixed $h>0$, $KCV(\tau)$ gives a weighted measure of the integrated covariance. However, as $h\to 0$, the instantaneous covariance can be recovered at any point of continuity $\tau$ of $t \mapsto \Sigma (t)$:
\begin{equation}
\Sigma (\tau)=\lim_{h\to \infty}KCV(\tau).\label{spotCovariance}
\end{equation}
To emphasize that we are working with an estimator of the instantaneous covariance at time $\tau$, we shall denote:
\begin{equation}
\widehat{\Sigma}(\tau)=\sum_{i=1}^nK_h(t_{i-1}-\tau)\Delta X(t_{i-1})\Delta X(t_{i-1})^\top  \label{eq:estimatedSpotCovariance}
\end{equation}
Note that, $\sum_{i=1}^nK_h(t_{i-1}-\tau)\Delta X(t_{i-1})\Delta X(t_{i-1})^\top$ can be regarded as the Nadarya-Watson estimator. An overview of this types of kernel can be found in Silverman \citeyearpar{Silverman}. In the univariate case, i.e. when $d=1$, we recover the spot variance estimator from Kristensen \citeyearpar{Kristensen}.

\subsection{Asymptotic Properties of the Kernel Covariance Estimator} \label{sec:AsymProp}
In this section we state the necessary assumptions and present the two out of the three main results of the paper. Our first theorem derives the asymptotic distribution of the kernel covariance estimator for the fixed bandwidth. Theorem \ref{theorem2} proves asymptotic normality of the kernel covariance estimator for a tending to zero bandwidth. Throughout our work we shall consider the following set of assumptions:
\begin{assumption}\label{assump1}
\normalfont The processes $\mu$ and $\Sigma $ are jointly independent of $W$.
\end{assumption}
This assumption holds for a widely used stochastic volatility models, such as  Heston \citeyearpar{HestonSteven}, Hull and White \citeyearpar{HullWhite}. Assumption \ref{assump1} greatly facilitates the proof by allowing us to make all arguments conditional on $\mu$ and $\Sigma$. Under Assumption \ref{assump1}, the volatility process being independent of $W$, the model falls into the case without leverage effects. However, this assumption does not appear to be strictly necessary as demonstrated in Kanaya and Kristensen \citeyearpar{Kanaya}. 
\begin{assumption} \label{assump2}
\normalfont For any sequences  $(i-1)\delta \leq s_i \leq t_i \leq i\delta$, with $i= 1, \cdots, n$ and every $k=1,\cdots,d$,  as $\delta \to 0$
\begin{equation}
\delta \sum_{i=1}^n\vert \mu_{k}^2(s_i) - \mu_{k}^2(t_i) \vert=o(1), \ \ \ \ \ \ \delta \sum_{i=1}^n \vert \Omega(s_i) - \Omega(t_i) \vert = o(1), 
\end{equation}
where $\Omega(t)=:\left\lbrace \Sigma_{kk'}(t)\Sigma_{ll'}(t) + \Sigma_{kl'}(t)\Sigma_{lk'}(t)\right\rbrace _{k,k',l,l'=1,\cdots ,d}.$
\end{assumption}
Assumptions \ref{assump2} imposes a restriction on the local behavior of the mean and covariance processes. It allows for the deterministic patterns, jumps, and nonstationarity, and is automatically satisfied when the mean and volatility processes have continuous trajectories. In particular, standard diffusion models such as Heston \citeyearpar{HestonSteven}, Hull and White \citeyearpar{HullWhite} satisfy this assumption.
\begin{assumption} \label{assump3}
\normalfont For every $k=1,\cdots,d$ and $i=1,\cdots,n$ the quantities
\begin{equation}
\delta^{-1}\int_{t_{i-1}}^{t_i}\Sigma_{kk}(s)ds \label{eq4assump3}
\end{equation}
are bounded away from 0 and infinity uniformly in $\delta$. 
\end{assumption}
Equation \eqref{eq4assump3} in Assumption \ref{assump3} essentially means that, on any bounded interval, $\Sigma_{kk}(t)$ itself is bounded away from infinity. This is the case, for example for Cox-Ingersoll-Ross (CIR) and Ornstein-Uhlenbeck (OU) processes in Cox et al. \citeyearpar{CIR}, Uhlenbeck and Ornstein \citeyearpar{OU} respectively. The above mentioned assumptions are sufficient to derive asymptotic distribution of $\widehat{KCV}(\tau)$, however in order to get the asymptotics of $\widehat{\Sigma}(\tau)$, when $h\to 0$, the general smoothness condition needs to be imposed on the covariance process. 
\begin{assumption} \label{assump4}
\normalfont The space $C^{m,\gamma}[0,T]$ for some $m\geq$ and $0<\gamma <1$ consists of functions $f:[0,T] \mapsto \mathbb{R}$ that are $m$ times differentiable with the $m$-th derivative $f^{(m)}(t)$, satisfying 
\begin{equation}
\vert f^{(m)}(t+\delta) - f^{(m)}(t) \vert \leq \mathcal{L}(t, |\delta |)|\delta|^\gamma + o(|\delta|^\gamma), \ \ \ \delta \to 0, \ \ (a.s.),
\end{equation}
where $\mathcal{L}(t, \delta)$ is Lipschitz coefficient, a slowly varying function at zero and $t \mapsto \mathcal{L}(t, 0)$ is continuous.
The mapping $t \mapsto \Sigma_{k,l}(t)$ for $k, l = (1,..., d)$ lies in $C^{m,\gamma}[0,T]$ for some $m\geq 0$ and $\gamma \geq 0$.
\end{assumption}
As stated in Yu et al. \citeyearpar{YuFang} this condition is satisfied by commonly used diffusion processes. When Assumption \ref{assump5} holds with $m=0$ and $\gamma < 0.5$ the model is driven by a Brownian motion (see e.g. Revuz and Yor \citeyearpar{RevuzYor}). 

We also impose requirements on the kernel function:
\begin{assumption} \label{assump5}
\normalfont The kernel $K:\mathbb{R} \mapsto \mathbb{R}$ 
\begin{enumerate}[label=(\alph*)]
\item satisfies $\int_{\mathbb{R}}K(x)dx=1$ and continuously differentiable, i.e. $K \in C^{1,0}$, such that 
\begin{equation}
\bar{K}_z \coloneqq \sup_{0\leq u \leq T} \vert K^{(z)}(u) \vert < \infty, \ \ \ \ z=0,1. \nonumber
\end{equation}
\item satisfies the condition that there exists some constants $\Lambda, L$ and $\Gamma_i < \infty$ such that $|K^{(i)}(u)|\leq \Lambda$, and for some $v>1$, $|K^{(i)}(u)|\leq \Gamma_i|u|^{-v}$ for $|u|\geq L$, $i=0,1$.
\item satisfies $\int_{\mathbb{R}}x^iK(x)dx =0$, $i=1,...,r-1$ and $\int_{\mathbb{R}}|x|^r|K(x)|dx , \infty$, for some $r\geq 0$. 
\end{enumerate}
\end{assumption}
The assumptions above are satisfied by most standard kernels for $r\leq 2$. When $r > 2$, $K$ is called a higher-order kernel. If $m > 2$ as well, the higher-order kernels can be used to reduce the bias in the estimation of more than twice differentiable functions. Although, as mentioned in Kristensen \citeyearpar{Kristensen}, since $m = 0$ is a usual case, Cline and Hart \citeyearpar{Cline} demonstrated that higher-order kernels can potentially reduce bias even when the object of interest is non-smooth and has jumps. 

Now we are ready to derive the asymptotics of the kernel covaraince estimator for a fixed bandwidth. 
\begin{thm} \label{theorem1}
If Assumptions \ref{assump1}-\ref{assump5} hold, we have that for fixed $h$ and any $\tau \in [0, T]$
\begin{equation}
\sqrt{\delta^{-1}}\left\lbrace \widehat{KCV}(\tau) - \int_0^TK_h(s-\tau)\Sigma(s)ds\right\rbrace \overset{\mathcal{L}}{\to} N\left(0, \int_0^TK_h^2(s-\tau)\Omega(s)ds\right), \label{asympdistr2}
\end{equation}
where $\Omega(t)$ is a $d^2\times d^2$ array with elements 
\begin{equation}
\Omega(t)=:\left\lbrace \Sigma_{kk'}(t)\Sigma_{ll'}(t) + \Sigma_{kl'}(t)\Sigma_{lk'}(t)\right\rbrace _{k,k',l,l'=1,\cdots ,d}. \label{Omega1}
\end{equation}
\end{thm}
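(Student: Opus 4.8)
The plan is to adapt the conditioning argument of Barndorff-Nielsen and Shephard \citeyearpar{barndorff2004a} and Kristensen \citeyearpar{Kristensen}, reducing the matrix statement to a scalar one via the Cram\'er--Wold device. Write $I_i:=[t_{i-1},t_i]$, $\Delta M(t_{i-1}):=\int_{I_i}\theta(s)\,dW(s)$, $\Delta A(t_{i-1}):=\int_{I_i}\mu(s)\,ds$, $\Sigma^*_i:=\int_{I_i}\Sigma(s)\,ds$, and set $\mathcal{G}:=\sigma\bigl(\{\mu(s),\theta(s)\}_{s\in[0,T]}\bigr)$. By Assumption \ref{assump1}, conditionally on $\mathcal{G}$ the $\Delta A(t_{i-1})$ are deterministic while the $\Delta M(t_{i-1})$ are independent, centered Gaussian vectors with $E[\Delta M(t_{i-1})\Delta M(t_{i-1})^\top\mid\mathcal{G}]=\Sigma^*_i$. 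Decomposing $\Delta X(t_{i-1})\Delta X(t_{i-1})^\top-\Sigma^*_i$ as $(\Delta M\Delta M^\top-\Sigma^*_i)+\Delta A\Delta A^\top+(\Delta A\Delta M^\top+\Delta M\Delta A^\top)$, I would first show that after multiplication by $\sqrt{\delta^{-1}}$ only the first piece survives: since $|\Delta A(t_{i-1})|=O(\delta)$ and $K$ is bounded (Assumption \ref{assump5}), the drift-square contribution is $O(\delta^{1/2})$, while the conditional Gaussian moments of $\Delta M$ give the cross term a conditional $L^2$-norm of order $\delta^{1/2}$; both are $o_P(1)$. The discretization error is handled similarly:
\[
\sqrt{\delta^{-1}}\Bigl\{\textstyle\sum_{i=1}^nK_h(t_{i-1}-\tau)\Sigma^*_i-\int_0^TK_h(s-\tau)\Sigma(s)\,ds\Bigr\}=\sqrt{\delta^{-1}}\sum_{i=1}^n\int_{I_i}\bigl(K_h(t_{i-1}-\tau)-K_h(s-\tau)\bigr)\Sigma(s)\,ds,
\]
whose norm is at most $\sqrt{\delta^{-1}}\,h^{-2}\bar{K}_1\,\delta\int_0^T|\Sigma(s)|\,ds=O(\delta^{1/2})$, using the bounded derivative of $K$ (Assumption \ref{assump5}(a)) and the integrability of $\Sigma$ (Assumption \ref{assump3}). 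Hence $\sqrt{\delta^{-1}}\{\widehat{KCV}(\tau)-\int_0^TK_h(s-\tau)\Sigma(s)\,ds\}$ has the same limiting law as $Z_n:=\sqrt{\delta^{-1}}\sum_{i=1}^nK_h(t_{i-1}-\tau)\bigl(\Delta M(t_{i-1})\Delta M(t_{i-1})^\top-\Sigma^*_i\bigr)$.

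To treat the $d^2$-vector $Z_n$, fix an array $a=(a_{kl})$ and write $\langle a,Z_n\rangle=\sum_{i=1}^n\zeta_{n,i}$ with $\zeta_{n,i}:=\sqrt{\delta^{-1}}K_h(t_{i-1}-\tau)\sum_{k,l}a_{kl}\bigl(\Delta M_k(t_{i-1})\Delta M_l(t_{i-1})-\Sigma^*_{i,kl}\bigr)$, which conditionally on $\mathcal{G}$ is a triangular array of independent, mean-zero summands. By Wick's formula for Gaussian fourth moments, $\mathrm{Cov}\bigl(\Delta M_k\Delta M_l,\Delta M_{k'}\Delta M_{l'}\mid\mathcal{G}\bigr)=\Sigma^*_{i,kk'}\Sigma^*_{i,ll'}+\Sigma^*_{i,kl'}\Sigma^*_{i,lk'}$, so
\[
\sum_{i=1}^n\mathrm{Var}(\zeta_{n,i}\mid\mathcal{G})=\delta\sum_{i=1}^nK_h^2(t_{i-1}-\tau)\sum_{k,l,k',l'}a_{kl}a_{k'l'}\Bigl\{(\delta^{-1}\Sigma^*_{i,kk'})(\delta^{-1}\Sigma^*_{i,ll'})+(\delta^{-1}\Sigma^*_{i,kl'})(\delta^{-1}\Sigma^*_{i,lk'})\Bigr\}.
\]
I would show this converges a.s.\ to $\int_0^TK_h^2(s-\tau)\sum_{k,l,k',l'}a_{kl}a_{k'l'}\Omega_{kk'll'}(s)\,ds$: Assumption \ref{assump3} bounds $\delta^{-1}\Sigma^*_{i,kk}$ uniformly in $\delta$ (hence, by Cauchy--Schwarz, every $\delta^{-1}\Sigma^*_{i,kl}$); the identity $\delta^{-1}\!\int_{I_i}f\cdot\delta^{-1}\!\int_{I_i}g-\delta^{-1}\!\int_{I_i}fg=-\delta^{-1}\!\int_{I_i}(f-\bar f_i)(g-\bar g_i)$, whose right side is at most $\mathrm{osc}_{I_i}(f)\,\mathrm{osc}_{I_i}(g)$ in modulus, replaces products of local averages of the entries of $\Sigma$ by local averages of the corresponding products; and Assumption \ref{assump2}, which controls $\delta\sum_i|\Omega(s_i)-\Omega(t_i)|$, turns the remaining weighted Riemann sum of $\Omega$ into its integral against $K_h^2(\cdot-\tau)$.

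For the Lyapunov tail condition the conditional Gaussian moment bounds give, for any $\epsilon>0$, $E[|\zeta_{n,i}|^{2+\epsilon}\mid\mathcal{G}]=O(\delta^{-(1+\epsilon/2)})\cdot O(\delta^{2+\epsilon})=O(\delta^{1+\epsilon/2})$ uniformly in $i$, so $\sum_{i=1}^nE[|\zeta_{n,i}|^{2+\epsilon}\mid\mathcal{G}]=O(\delta^{\epsilon/2})\to0$. The central limit theorem for triangular arrays then yields, for $P$-a.e.\ realization of $\mathcal{G}$, conditional convergence of $\langle a,Z_n\rangle$ to $N\bigl(0,\int_0^TK_h^2(s-\tau)\,a^\top\Omega(s)\,a\,ds\bigr)$; since the limiting variance is $\mathcal{G}$-measurable, integrating the conditional characteristic functions against $P$ (bounded convergence) gives the same limit unconditionally, and varying $a$ and applying the Cram\'er--Wold device yields \eqref{asympdistr2} with $\Omega$ as in \eqref{Omega1}. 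I expect the main work to be the variance step: bounding the discrepancy between products of local averages of $\Sigma$ and local averages of the corresponding products, and then the Riemann-sum-to-integral passage — precisely where Assumptions \ref{assump2}--\ref{assump3} enter and where the form of $\Omega$ is pinned down.
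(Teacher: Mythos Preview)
Your proposal is correct and follows essentially the same route as the paper: condition on the volatility and drift paths (Assumption~\ref{assump1}), reduce the matrix statement to a scalar one via the Cram\'er--Wold device, compute the conditional mean and covariance of the kernel-weighted increments using the Gaussian fourth-moment (Isserlis/Wick) identity that yields the $\Omega$ structure, verify the Lyapunov condition for the Lindeberg--Feller CLT for triangular arrays, and dispose of the drift contribution separately. The only cosmetic differences are that the paper quotes the increment covariance formula directly from Barndorff-Nielsen and Shephard \citeyearpar{barndorff2004a} and invokes Kristensen's Lemma~6 (our Lemma~\ref{lemma1}) for the Riemann-sum-to-integral passages, whereas you spell out the Wick computation and the oscillation/Assumption~\ref{assump2} argument explicitly, and you are more explicit than the paper about the unconditioning step via bounded convergence of characteristic functions.
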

\begin{proof}
We give the proof in several steps. First we derive the means, variances and covariances of the variates
\begin{eqnarray}\nonumber
\widehat{KCV}_{kl}(\tau)&=&\sum_{i=1}^n K_h(t_{i-1}-\tau)\Delta X_{k}(t_{i-1})\Delta X_{l}(t_{i-1})\\
&=&\sum_{i=1}^nK_h(t_{i-1}-\tau)\left(X_{k}(t_i) - X_{k}(t_{i-1})\right)\left(X_{l}(t_i)- X_{l}(t_{i-1})\right). \nonumber
\label{KCVkl2}
\end{eqnarray}
with $k,l=1,2,\cdots,d$. Second, the Theorem \ref{theorem1} is proved for the case, where the mean processes $\mu_{k}$ are identically $0$, by employing Cramer-Wold device. Finally, the latter restriction is lifted and using lemma \ref{lemma1} in Appendix \ref{proofLemma} the negligibility of non-zero drift term is shown. The proof is component-wise and based on the results and techniques employed by Barndorff-Nielsen and Shephard \citeyearpar{barndorff2004a} and Kristensen \citeyearpar{Kristensen}. See Appendix \ref{appendix:proofTheorem1} for the details of the proof.
\end{proof}

This theorem is an intermediate step in the derivation of the asymptotic distribution of the estimator for a shrinking bandwidth. The Theorem \ref{theorem1} is necessary for the proof of the asymptotic normality of the spot kernel covariance estimator in \eqref{eq:estimatedSpotCovariance}. 
\begin{thm} \label{theorem2}
If Assumptions \ref{assump1}-\ref{assump5} hold with $r\geq m+\gamma$, then as $nh \to \infty$ and $nh^{2(m+\gamma)+1}\to 0$ for any $t \in (0,T)$ we have
\begin{equation}
\sqrt{\delta^{-1}h}\left\lbrace \widehat{\Sigma}(t) - \Sigma(t) \right\rbrace \overset{\mathcal{L}}{\to} N \left( 0, \Omega(t)\int_{\mathbb{R}}K^2(z)dz \right)
\end{equation}
where $\Omega(t)$ is a $d^2\times d^2$ array with elements 
\begin{equation}
\Omega(t)=: \left\lbrace \Sigma_{kk'}(t)\Sigma_{ll'}(t) + \Sigma_{kl'}(t)\Sigma_{lk'}(t)\right\rbrace _{k,k',l,l'=1,\cdots ,d}. \label{Omega2}
\end{equation}
\end{thm}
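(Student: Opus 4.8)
The plan is to deduce Theorem~\ref{theorem2} from Theorem~\ref{theorem1} by a bias--variance decomposition and then to re-run the central limit argument with a bandwidth $h=h_n$ that now shrinks with $n$. Since $\widehat{\Sigma}(t)=\widehat{KCV}(t)$ evaluated at $\tau=t$, I would start from
\begin{equation}
\sqrt{\delta^{-1}h}\bigl\{\widehat{\Sigma}(t) - \Sigma(t)\bigr\} = \underbrace{\sqrt{\delta^{-1}h}\bigl\{\widehat{KCV}(t) - KCV(t)\bigr\}}_{\text{(I): stochastic part}} + \underbrace{\sqrt{\delta^{-1}h}\bigl\{KCV(t) - \Sigma(t)\bigr\}}_{\text{(II): smoothing bias}},
\end{equation}
where $KCV(t)=\int_0^T K_h(s-t)\Sigma(s)\,ds$ as in \eqref{integratedCovariance}, and then treat the two terms separately and recombine by Slutsky and the Cram\'er--Wold device.

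For the bias term (II) I would argue componentwise. Using that $t\mapsto\Sigma_{kl}(t)$ lies in $C^{m,\gamma}[0,T]$ (Assumption~\ref{assump4}), a Taylor expansion of $\Sigma_{kl}$ around the interior point $t$, combined with the vanishing-moment conditions $\int x^iK(x)\,dx=0$ for $i=1,\dots,r-1$ and the hypothesis $r\ge m+\gamma$ (Assumption~\ref{assump5}(c)), yields $\lvert KCV_{kl}(t)-\Sigma_{kl}(t)\rvert = O_p(h^{m+\gamma})$; the mass of the kernel outside $[0,T]$ is handled by the tail bound $\lvert K(u)\rvert\le\Gamma_0\lvert u\rvert^{-v}$ in Assumption~\ref{assump5}(b) and is $O(h^{v-1})$, hence negligible for small $h$ since $t\in(0,T)$. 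Therefore $\sqrt{\delta^{-1}h}\,\lvert KCV_{kl}(t)-\Sigma_{kl}(t)\rvert = O_p\bigl(\sqrt{nh^{2(m+\gamma)+1}/T}\bigr)\to0$ by the assumed rate $nh^{2(m+\gamma)+1}\to0$, so (II) vanishes in probability.

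For the stochastic term (I) I would condition on $(\mu,\Sigma)$ (Assumption~\ref{assump1}), first set $\mu\equiv0$ and restore a general drift at the end via the negligibility argument behind Lemma~\ref{lemma1}, and apply Cram\'er--Wold: fix $c\in\mathbb{R}^{d^2}$ and study $c^\top\mathrm{vec}\bigl\{\widehat{KCV}(t)-KCV(t)\bigr\}$ as a sum of the conditionally independent, conditionally Gaussian triangular-array terms $\xi_{ni}=\sqrt{\delta^{-1}h}\,K_h(t_{i-1}-t)\,c^\top\mathrm{vec}\bigl\{\Delta X(t_{i-1})\Delta X(t_{i-1})^\top - E[\,\cdot\,]\bigr\}$. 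Its conditional variance is $\delta^{-1}h\sum_{i=1}^n K_h^2(t_{i-1}-t)\,\mathrm{Var}\bigl(c^\top\mathrm{vec}(\Delta X\Delta X^\top)\bigr)$; since each increment has conditional covariance $\int_{t_{i-1}}^{t_i}\Sigma$, Isserlis's identity reduces this to $h\,\delta\sum_i K_h^2(t_{i-1}-t)\,c^\top\Omega(t_{i-1})c + o_p(1)$, Assumptions~\ref{assump2}--\ref{assump3} let me pass from this Riemann sum to $h\int_0^T K_h^2(s-t)\,c^\top\Omega(s)c\,ds$, and the change of variables $z=(s-t)/h$ together with continuity of $\Omega$ and dominated convergence (again using the tail bound in Assumption~\ref{assump5}(b)) gives the limit $c^\top\Omega(t)c\int_{\mathbb{R}}K^2(z)\,dz$. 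Asymptotic normality then follows from a triangular-array CLT once I verify a Lyapunov condition: $\sum_i E\lvert\xi_{ni}\rvert^4 \lesssim (\delta^{-1}h)^2\,\delta^4\sum_i K_h^4(t_{i-1}-t)\approx \delta h^{-1}\!\int_{\mathbb{R}}K^4(z)\,dz = (T/(nh))\!\int K^4 \to 0$, which holds precisely because $nh\to\infty$. Thus term (I) converges in law to $N\bigl(0,\Omega(t)\int_{\mathbb{R}}K^2(z)\,dz\bigr)$, and combining with (II)$\to0$ completes the proof.

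The main obstacle is this second step: unlike in Theorem~\ref{theorem1}, the weights $K_h(t_{i-1}-t)$ concentrate on an $O(nh)$-sized neighbourhood of $t$, so the CLT is genuinely a triangular-array statement whose Lyapunov ratio is of order $1/(nh)$ --- this is exactly the role of $nh\to\infty$, while the complementary condition $nh^{2(m+\gamma)+1}\to0$ is what pushes the smoothing bias (II) below the $\sqrt{\delta^{-1}h}$ scale. Carefully tracking the interplay of these two rates, the kernel change-of-variables bookkeeping, and the tail estimates needed when $K$ has unbounded support is where the real work lies; the drift-negligibility and Cram\'er--Wold reductions are routine given Lemma~\ref{lemma1} and the proof of Theorem~\ref{theorem1}.
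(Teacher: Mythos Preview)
Your proposal is correct and follows essentially the same strategy as the paper's proof: the paper also computes the conditional mean of $\widehat{\Sigma}_{kl}(\tau)$ via Lemma~\ref{lemma2}(i) (your bias term (II), controlled by $h^{m+\gamma}$ and hence by $nh^{2(m+\gamma)+1}\to0$), the conditional covariance via Lemma~\ref{lemma2}(ii) (your variance computation for (I), with the same $h\int K_h^2 \to \int K^2$ change of variables), and then invokes the triangular-array Lindeberg/Lyapunov CLT plus Cram\'er--Wold exactly as in the proof of Theorem~\ref{theorem1}. Your write-up is in fact more explicit than the paper's on two points --- the Lyapunov ratio being of order $(nh)^{-1}$ and the precise role of the two rate conditions --- but the underlying argument is the same.
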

\begin{proof}
See Appendix \ref{appendix:proofTheorem2}.
\end{proof}
Bibinger et al. \citeyearpar{Bibinger} propose spot covariance estimator which is constructed based on local averages of block-wise parametric spectral covariance estimates. This is an extension of the local method of moments (LMM) in Bibinger and Reiss \citeyearpar{BibingerReiss}. Since Bibinger et al. \citeyearpar{Bibinger} consider a setting with market microstructure noise, their estimator attains the optimal rate of convergence ($\sqrt{n^4}$) which is slower compared to the convergence rate of the kernel covariance estimator ($\sqrt{n}$). The kernel estimator in equation \eqref{eq:estimatedSpotCovariance} is fairly easy to implement. 
\begin{remark}[Remark: the bivariate case.]
It is helpful to focus on the bivariate case in order to gain further understanding. We will look at the results for the assets $k$ and $l$, whose log-prices will be written as $X_{k}$ and $X_{l}$ respectively. Then the high-frequency returns at time $t_i$ is
\begin{equation*}
\Delta X_k(t_i)= X_k(t_i) - X_k(t_i-1) \ \ \ \text{and} \ \ \ \Delta X_l(t_i)= X_l(t_i) - X_l(t_i-1) \ \ \ \text{for} \  i=1,\cdots,n.
\end{equation*}
In order to avoid the symmetric replication in the covariation matrix we employ a half-vectorization, or alternatively, a vech transformation. The half-vectorization of a symmetric matrix is obtained by vectorizing only the lower triangular part of the matrix (see Kollo and Rosen \citeyearpar{KolloRosen},  L{\"u}tkeohl \citeyearpar{Lutkeohl}). In this case Theorem \ref{theorem1} tells us that joint asymptotic distribution for identifying elements of realized covariation of two assets $X_{k}$ and $X_l$ becomes
\begin{eqnarray}
\sqrt{\delta^{-1}} \left( \begin{array}{c}
\sum_{i=1}^n K_h(t_{i-1}-\tau)\Delta X_k^2(t_i) - \int_0^TK_h(s-\tau)\Sigma_{kk}(s)ds\\
\sum_{i=1}^n K_h(t_{i-1}-\tau)\Delta X_k(t_i)\Delta X_l(t_i) - \int_0^TK_h(s-\tau)\Sigma_{kl}(s)ds  \\
\sum_{i=1}^n K_h(t_{i-1}-\tau)\Delta X_l^2(t_i) - \int_0^TK_h(s-\tau)\Sigma_{ll}(s)ds \end{array} \right)\overset{\mathcal{L}}{\to}& \nonumber \\
N\left[0, \displaystyle\int_{0}^{T}K_h^2(s-\tau)\left(  \scalemath{0.8}{\begin{array}{ccc}
2\Sigma_{kk}^2(s) & 2\Sigma_{kk}(s)\Sigma_{kl}(s) & 2\Sigma_{kl}^2(s)) \\
2\Sigma_{kk}(s)\Sigma_{kl}(s) & \Sigma_{kk}(s)\Sigma_{ll}(s)+\Sigma_{kl}^2(s) 2\Sigma_{ll}(s)\Sigma_{kl}(s) & 2\Sigma_{ll}(s)\Sigma_{kl}(s)\\
2\Sigma_{kl}^2(s) & 2\Sigma_{ll}(s)\Sigma_{kl}(s) & 2\Sigma_{ll}^2(s)) \end{array}} \right)du\right]. \nonumber
\end{eqnarray}
\end{remark}
\section{Jump Case: Threshold Kernel Covariance Estimation} \label{sec:JumpCaseEstimator}
In this section we assume that the price process is governed by a discontinuous semimartingale with finite activity jumps. We propose a threshold kernel spot covariance estimator, which is an extension of the threshold kernel spot volatility estimator in Yu et al. \citeyearpar{YuFang} to the multivariate case. Theorem \ref{theorem3} derives the asymptotic distribution of the threshold kernel covariance estimator for a fixed bandwidth.

Consider a filtered probability space $(\Omega, (\mathcal{F})_{t\in[0,T]},\mathcal{F}, P)$. Let the d-dimensional (with fixed $d$) log-price $X(t) = (X_1(t), X_2(t), ..., X_d(t))$ be defined on the this space and satisfy the following stochastic differential equation:
\begin{equation}
dX(t) = \mu(t)dt + \theta(t)dW(t) + dJ(t), \ \ \ t \in [0,T].
\end{equation}
where $\mu(t)$ is the drift vector, $\theta(t)$ is the instantaneous volatility matrix, $W(t)$ is the $d$-dimensional standard Brownian motion and $J(t)=(J_1(t),..., J_d(t))$ is a compound Poisson process with finite activity of jumps, which can be written as $J(t)=\sum_{i=1}^{N(t)}(Z_{1}(\tau_i), ..., Z_{d}(\tau_i))$
$=\left( \sum_{i=1}^{N(t)}Z_{1}(\tau_i), ..., \sum_{i=1}^{N(t)}Z_{d}(\tau_i) \right)$. Here \noindent $(N(t))_{t\geq 0}$ is a homogeneous Poisson process with constant intensity $\lambda > 0$ and $(Z_k)_{k \in \mathbb{N}}$ is a sequence of i.i.d. random variables with values in $\mathbb{R}^d$, which denotes the jump size at the jump location $\tau_i$. We assume $Z_k(\tau_i)$ for $k=1,2,...d$ are i.i.d. and independent of $N_t$. Denote the $(d\times d)$-dimensional \textit{spot covariance matrix} by $\Sigma(t) = \Theta(t)\Theta(t)^\top$.

Suppose that on a finite and fixed time horizon $[0,T]$, we have $n+1$ high-frequency discrete observations $X_k(t_0),X_k(t_1),...,X_k(t_{n-1}),X_k(t_n)$ of the realization of $k$-th asset, with $k=1,2,...,d$. Here, $0=t_0<t_1<...<t_n=T$ is an arbitrary partition of the interval $[0,T]$. Although the observations are not necessarily equidistant, we require that $\max_{i=1,...,n}\vert t_i - t_{i-1}\vert$ approaches zero under the asymptotic limit. We consider the case of equally spaced and synchronous observation times, though this assumption can easily be lifted. Denote $\delta=T/n$, so that $t_i=i\delta$ for $i=1,2,\cdots,n$. 

The quantity of interest is the spot covariance matrix $\Sigma(t)$. The threshold kernel covariance estimator, denoted by $\widehat{TCV}$, is defined as 
\begin{equation}
\widehat{TCV}(\tau) = \sum_{i=1}^nK_h(t_{i-1}-\tau)\Delta X(t_{i-1})\Delta X^\top(t_{i-1})\mathbbm{1}_{\{ \lVert  \Delta X_{t_{i -1}} \rVert  \leq dr(\delta)\}}, 
\end{equation}\label{eq:TKC}
where $\mathbbm{1}(\cdot)$ is the indicator function and $\Delta X(t_{i-1})=X(t_i) - X(t_{i-1})$ is the $d$-dimensional vector of increments of process $X$ over time interval $[t_{i-1},t_i]$. The function $K_h(x)$ is given by $K(x/h)/h$, where $h$ is bandwidth and the kernel function $K(x)$ satisfies $\int_{\mathbb{R}}K(x)dx=1$. The threshold function $r(\delta)$ is a deterministic function of the step length $\delta$. As the bandwidth $h \to 0$ we recover the spot covariance. The threshold function $r(\delta)$ has to vanish more slowly than the modulus of the continuity of the Brownian motion in order to have the convergence in probability. Thus we have the following additional assumption.
\begin{assumption}\label{assump6}
$r(\delta)$ is a deterministic function of the step length $\delta$ such that $\underset{\delta\to 0}{\lim}r(\delta)=0$ and $\underset{\delta\to 0}{\lim}\frac{\delta\log\frac{1}{\delta}}{r(\delta)}=0$.
\end{assumption}
We now can derive the asymptotics of the threshold kernel covariance estimator.
\begin{thm} \label{theorem3}
If Assumptions \ref{assump1}-\ref{assump6} hold, we have that for fixed $h$ and any $t \in [0, T]$
\begin{equation}
\sqrt{\delta^{-1}}\left\lbrace \widehat{TCV}(\tau) - \int_0^TK_h(s-\tau)\Sigma(s)ds\right\rbrace \overset{\mathcal{L}}{\to} N\left(0, \int_0^TK_h^2(s-\tau)\Omega(s)ds\right), \label{asympdistr1}
\end{equation}
where $\Omega(t)$ is a $d^2\times d^2$ array with elements 
\begin{equation}
\Omega(t)=:\left\lbrace \Sigma_{kk'}(t)\Sigma_{ll'}(t) + \Sigma_{kl'}(t)\Sigma_{lk'}(t)\right\rbrace _{k,k',l,l'=1,\cdots ,d}. \label{Omega}
\end{equation}
\end{thm}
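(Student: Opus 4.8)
The plan is to reduce Theorem~\ref{theorem3} to Theorem~\ref{theorem1} by showing that the thresholding indicator $\mathbbm{1}_{\{\lVert \Delta X_{t_{i-1}}\rVert \le dr(\delta)\}}$ asymptotically removes exactly the increments containing a jump while retaining all the purely continuous increments, so that the threshold estimator $\widehat{TCV}(\tau)$ and the no-jump kernel estimator $\widehat{KCV}(\tau)$ (built from the continuous part $X^c$ of $X$) differ by a term that is $o_P(\sqrt{\delta})$. Write $X = X^c + J$ where $X^c(t) = X(0) + \int_0^t \mu(s)\,ds + \int_0^t \theta(s)\,dW(s)$ is the continuous semimartingale governed by the same dynamics as in \eqref{stochasticProcess}, and let $\widehat{KCV}^c(\tau)$ denote the estimator \eqref{eq:KCV_estimator} formed from the increments $\Delta X^c(t_{i-1})$. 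Since $J$ is a finite-activity compound Poisson process, almost surely only finitely many intervals $[t_{i-1},t_i]$ contain a jump time $\tau_j$; call these the ``jump intervals.''

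First I would partition the index set $\{1,\dots,n\}$ into the jump intervals $\mathcal{J}_n$ and the continuous intervals $\mathcal{C}_n = \{1,\dots,n\}\setminus\mathcal{J}_n$, and write
\begin{equation}
\widehat{TCV}(\tau) - \widehat{KCV}^c(\tau) = \sum_{i\in\mathcal{C}_n} K_h(t_{i-1}-\tau)\,\Delta X(t_{i-1})\Delta X^\top(t_{i-1})\big(\mathbbm{1}_{\{\lVert\Delta X_{t_{i-1}}\rVert\le dr(\delta)\}} - 1\big) + \sum_{i\in\mathcal{J}_n} K_h(t_{i-1}-\tau)\,\Delta X(t_{i-1})\Delta X^\top(t_{i-1})\,\mathbbm{1}_{\{\lVert\Delta X_{t_{i-1}}\rVert\le dr(\delta)\}}. \nonumber
\end{equation}
For the continuous intervals, on $\mathcal{C}_n$ one has $\Delta X(t_{i-1}) = \Delta X^c(t_{i-1})$, and by the L\'evy modulus of continuity of Brownian motion together with the boundedness of $\mu$ and c\`adl\`ag-ness of $\theta$, $\max_{i\in\mathcal{C}_n}\lVert\Delta X^c(t_{i-1})\rVert = O_P(\sqrt{\delta\log(1/\delta)})$, which by Assumption~\ref{assump6} is $o(r(\delta))$; hence the indicator equals $1$ for every $i\in\mathcal{C}_n$ with probability tending to one, so the first sum vanishes (in probability, identically on the relevant event). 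For the jump intervals, there are only $O_P(1)$ of them and each summand is $O_P(r(\delta)^2) = o_P(\delta)$ by construction of the truncation, so the second sum is $o_P(\delta) = o_P(\sqrt{\delta})$ after multiplication by $\sqrt{\delta^{-1}}$. This is essentially the multivariate analogue of the truncation argument in Yu et al.~\citeyearpar{YuFang}, adapted by controlling the Euclidean norm of the vector increment rather than a scalar one.

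It then remains to compare $\widehat{KCV}^c(\tau)$ with the centering term $\int_0^T K_h(s-\tau)\Sigma(s)\,ds$ and to invoke Theorem~\ref{theorem1}. Since $X^c$ is precisely a continuous semimartingale of the form \eqref{stochasticProcess} with the same $\mu$, $\theta$, and spot covariance $\Sigma(t) = \theta(t)\theta(t)^\top$ (note the compound Poisson jump part contributes nothing to $\Sigma$ as defined here), and Assumptions~\ref{assump1}--\ref{assump5} hold for this continuous model, Theorem~\ref{theorem1} gives
\begin{equation}
\sqrt{\delta^{-1}}\left\{\widehat{KCV}^c(\tau) - \int_0^T K_h(s-\tau)\Sigma(s)\,ds\right\} \overset{\mathcal{L}}{\to} N\left(0,\int_0^T K_h^2(s-\tau)\Omega(s)\,ds\right). \nonumber
\end{equation}
Combining this with the $o_P(\sqrt{\delta})$ bound on $\widehat{TCV}(\tau) - \widehat{KCV}^c(\tau)$ and applying Slutsky's theorem (component-wise, then via the Cram\'er--Wold device to recover the joint limit over the $d^2$ entries) yields \eqref{asympdistr1}.

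The main obstacle I anticipate is making the truncation step fully rigorous and uniform: one must verify that on the continuous intervals the increments $\Delta X^c(t_{i-1})$ never exceed the threshold $dr(\delta)$ with probability going to one, which requires a uniform-in-$i$ modulus-of-continuity bound for the stochastic integral $\int \theta\,dW$ (not merely for a standard Brownian motion), and this is where Assumption~\ref{assump6}'s rate $\delta\log(1/\delta)/r(\delta)\to 0$ is used in an essential way; some care is also needed because $\theta$ is only c\`adl\`ag, so one should localize on events where $\theta$ is bounded. A secondary technical point is handling the drift: as in Theorem~\ref{theorem1}, one first proves the result for $\mu \equiv 0$ and then shows the non-zero drift contributes negligibly, re-using Lemma~\ref{lemma1}.
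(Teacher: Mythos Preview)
Your overall strategy---reduce to Theorem~\ref{theorem1} by showing $\widehat{TCV}(\tau)=\widehat{KCV}^c(\tau)+o_P(\sqrt{\delta})$ and then apply Slutsky plus Cram\'er--Wold---is exactly the paper's. However, your treatment of the jump intervals contains two genuine gaps.

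First, the displayed decomposition of $\widehat{TCV}(\tau)-\widehat{KCV}^c(\tau)$ is missing a term: on $\mathcal{J}_n$ you record only $\sum_{i\in\mathcal{J}_n}K_h\,\Delta X\Delta X^\top\mathbbm{1}_{\{\cdot\}}$, but the correct summand is $K_h\bigl[\Delta X\Delta X^\top\mathbbm{1}_{\{\cdot\}}-\Delta X^c(\Delta X^c)^\top\bigr]$. The omitted piece $\sum_{i\in\mathcal{J}_n}K_h\,\Delta X^c(\Delta X^c)^\top$ is precisely what the paper bounds: since $|\mathcal{J}_n|=N_T=O_P(1)$, $K_h$ is bounded for fixed $h$, and $\max_i\lVert\Delta X^c(t_{i-1})\rVert^2=O_P(\delta\log(1/\delta))$ by the L\'evy modulus argument you already invoke on $\mathcal{C}_n$, this term is $O_P(\delta\log(1/\delta))$ and hence negligible after scaling by $\sqrt{\delta^{-1}}$.

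Second, and more seriously, the claim ``each summand is $O_P(r(\delta)^2)=o_P(\delta)$'' is not implied by Assumption~\ref{assump6}. That assumption only forces $r(\delta)\to 0$ with $\delta\log(1/\delta)/r(\delta)\to 0$; it permits, e.g., $r(\delta)=\delta^{1/10}$, for which $\sqrt{\delta^{-1}}\,r(\delta)^2=\delta^{-3/10}\to\infty$. The correct argument---implicit in the paper's first equality, where the threshold indicator is replaced by $\mathbbm{1}_{\{\Delta_{i-1}N=0\}}$---is that on each jump interval the indicator is eventually $0$ almost surely: the finitely many jump sizes $Z(\tau_j)$ are a.s.\ nonzero, $\max_i\lVert\Delta X^c(t_{i-1})\rVert\to 0$, and $dr(\delta)\to 0$, so for small $\delta$ one has $\lVert\Delta X(t_{i-1})\rVert\ge\min_j\lVert Z(\tau_j)\rVert-\max_i\lVert\Delta X^c(t_{i-1})\rVert>dr(\delta)$ on every $i\in\mathcal{J}_n$. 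Then $\sum_{i\in\mathcal{J}_n}K_h\,\Delta X\Delta X^\top\mathbbm{1}_{\{\cdot\}}$ vanishes identically on an event of probability tending to one, and no rate condition on $r(\delta)^2$ is needed. With these two corrections your proof coincides with the paper's.
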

\begin{proof}
See Appendix \ref{appendix:proofTheorem3} 
\end{proof}
The threshold kernel covariance estimator in equation \eqref{eq:TKC} is an extension of the threshold kernel estimator of the time-dependent spot volatility in Yu et al. \citeyearpar{YuFang} to the multivariate case. In Theorem \ref{theorem3} we derive asymptotic distribution for the estimator for a fixed bandwidth $h$ of the kernel. The similar results as in Theorem \ref{theorem3} was achieved for univariate case in Yu et al. \citeyearpar{YuFang}.

\section{Simulation Study}\label{sec:simStudy}
In this section we examine the performance of the kernel and threshold kernel covariance estimators. In particular, we investigate the finite-sample performances of the estimators relative to the time distance between observations. Throughout we work with bivariate stochastic volatility model. First, we examine the kernel covariance estimator in a setup without jumps and assume that asset prices, $Y(t) = (Y_1(t), Y_2(t))$, follows Heston model:
\begin{equation}
dY(t) = \mu Y(t)dt + \theta(t)Y(t)dW(t), \ \ \ \ \ \ \ \Sigma(t) = \theta(t)\theta'(t),
\end{equation} \label{eq:HestonNoJump}
where 
\begin{equation}
\Sigma(t) = \begin{pmatrix} 
\Sigma_{11}(t) & \Sigma_{12}(t)\\
\Sigma_{12}(t) & \Sigma_{22}(t)
\end{pmatrix} = \begin{pmatrix} 
\sigma_{1}^2(t) & \sigma_{1,2}(t)\\
\sigma_{1,2}(t) & \sigma_{2}^2(t)
\end{pmatrix}
\end{equation}
with the covariance $\sigma_{1,2}(t) = \sigma_{1}(t)\sigma_2(t)\rho(t)$, the drift vector $\mu(t) = (\mu_1(t),\mu_2(t))$ and a standard two dimensional Brownian motion $W(t) = (W_1(t), W_2(t))$ such that $d\left\langle W_1, W_2 \right\rangle_t = \rho dt$. The variance processes, $\sigma_i(t)$ for $i = 1,2$, follow the CIR model Cox et al. \citeyearpar{CIR}:
\begin{equation}
d\sigma^2_i(t) = \kappa_i(\theta_i - \sigma^2_i(t))dt + \eta_i\sigma_i(t)dZ_i(t).
\end{equation}
We set the correlation between asset and its volatility process to zero in order for Assumption \ref{assump1} to hold. The remaining data generating parameters are chosen to match the estimated parameter values in Barndorff-Nielsen and Shephard \citeyearpar{BandrofShephardRealizedVariance}. In our simulation we set $T=2$ (48 hours). We consider frequencies $\Delta^{-1} = 12\times 60 \times 24, 2 \times 60 \times 24, 60 \times 24, 12 \times 24, 6\times 24$ corresponding to sampling every 5 seconds, 20 seconds, 1 minute, 5 minutes and 10 minutes. In order to simulate the data using model \eqref{eq:HestonNoJump} we employ the Euler discretization scheme from Kloeden and Platen \citeyearpar{KloedenPlaten}. We simulate one trajectory of each $\{\sigma_i^2(t)\}$ for $i=1,2$ and keep them fixed. Then we run 500 Monte Carlo repetitions for  prices of two assets $\{Y_1(t),Y_2(t)\}$. In each repetition we compute $\hat{\Sigma}_{kl}(t)$ for $i=1,2$ based on sampling frequencies.

Three different estimators of instantaneous covariance: Gaussian kernel estimator, one-sided kernel estimator and beta kernel estimator are implemented. For all three estimators cross-validation was used to select the bandwidth (see Kristensen \citeyearpar{Kristensen}). We used the following integrated squared error (ISE) as the goodness-of-fit criterion:
\begin{equation}
\text{ISE}(h) = \int_{t_l}^{t_u}\left(\Sigma_{kl}(s) - \widehat{\Sigma}_{kl}(s)\right)^2ds, \ \ \ \ \ \ \ \text{for} \ \ 0<t_l<t_u<T,
\end{equation}
where $\Sigma_{kl}(s)$ and $\widehat{\Sigma}_{kl}(s)$ for $k,l = 1,2$ are the true and the estimated spot covariances. Two performance measurements are used to evaluate the finite-sample properties of the estimators: the integrated mean squared error and the integrated bias
\begin{table}
\centering
\begin{threeparttable}[b]
\captionof{table}{Interior performance of the $KCV$ estimator}\label{tab:interperf}
\begin{tabular}{lllcrrr}
\hline
 \multicolumn{3}{r}{Gaussian kernel} & \multicolumn{2}{r}{\textbf{$\text{One-sided kernel}^{*}$}} & \multicolumn{2}{r}{Beta kernel} \\
\cline{2-7}
Data Frequency    & IMSE & ISB &  \textbf{IMSE} & \textbf{ISB} &  IMSE & ISB  \\
\hline
5 seconds      & 0.14   &  0.37 & \textbf{0.11} &   \textbf{0.21}  & 0.13 &    0.28\\
20 seconds     & 0.73 &   0.63  & \textbf{0.43}  &  \textbf{0.49} & 0.66&   0.46 \\
1 minute    &  0.80   &  0.74 & \textbf{0.59} &  \textbf{0.71} & 0.76 &  0.69 \\
5 minutes       & 1.85       & 1.97   & \textbf{1.17}        & \textbf{1.24} & 2.03 &    1.43 \\
10 minutes     &  3.88 & 4.21 & \textbf{2.16} &    \textbf{2.14} & 2.85 &  3.16 \\
\hline
\end{tabular}
\begin{tablenotes}\footnotesize
\item Note: Integrated mean squared error $(\times 10^{-5})$ and integrated squared bias $(\times 10^{-5})$.
\end{tablenotes}
\end{threeparttable}
\end{table}
\begin{equation}
\text{IMSE} = \int_{t_l}^{t_u} \left[ (\Sigma_{kl}(s) - \widehat{\Sigma}_{kl}(s))^2\right]ds, \ \ \text{ISB}=\int_{t_l}^{t_u}\left[ E(\Sigma_{kl}(s) - \widehat{\Sigma}_{kl}(s))^2ds \right], \label{eq:IMSE_ISB}
\end{equation}
where $0 \leq t_l \leq t_u \leq T$. The results for the performance of the estimator of the covariance, $\widehat{\Sigma}_{12}(t)$, are reported in Table \ref{tab:interperf}. Figure \ref{fig:KC_QQ} displays QQ plot for observed standardized error terms of Kernel Covariance Estimator using minute-by-minute data.
\begin{figure}[htb]
\begin{center}
\includegraphics[width=0.7\textwidth]{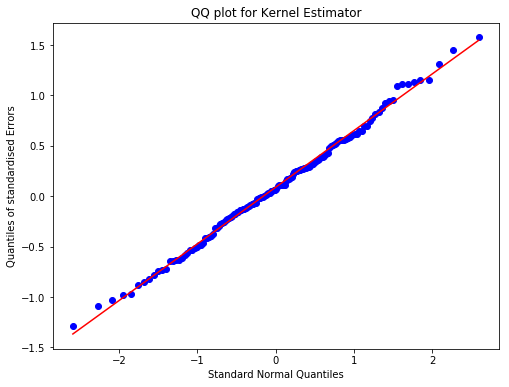}
\caption{QQ plot for observed standardized error terms of Kernel Covariance Estimator using minute-by-minute data}
\label{fig:KC_QQ}
\end{center}
\end{figure}

Next, we examine the finite sample performance of the threshold covariance estimator. Though several models combining jumps and stochastic volatility
appeared in the literature, we use the model from Bates \citeyearpar{Bates}, one of the most popular examples of the class, an independent jump component is added to the Heston
stochastic volatility model: 
\begin{equation}
dX(t) = \mu dt + \theta(t)dW(t) + dJ(t), \ \ \ \ \ \ \ \Sigma(t) = \theta(t)\theta'(t),
\end{equation} \label{eq:HestonJump}
with 
\begin{equation}
\Sigma(t) = \begin{pmatrix} 
\Sigma_{11}(t) & \Sigma_{12}(t)\\
\Sigma_{12}(t) & \Sigma_{22}(t)
\end{pmatrix} = \begin{pmatrix} 
\sigma_{1}^2(t) & \sigma_{1,2}(t)\\
\sigma_{1,2}(t) & \sigma_{2}^2(t),
\end{pmatrix}
\end{equation}
where $X(t) = (X_1(t),X_2(t))$ is log of asset prices, $\sigma_{1,2}(t) = \sigma_{1}(t)\sigma_2(t)\rho(t)$, $\mu = (\mu_1,\mu_2)$ is the drift vector, $J(t)=\sum_{i=1}^{N(t)}(Z_{1}(\tau_i),Z_{2}(\tau_i))$ is a two dimensional compound Poisson jump process and $W(t) = (W_1(t), W_2(t))$ is a standard two dimensional Brownian motion such that $d\left\langle W_1, W_2 \right\rangle_t = \rho dt$. The variance processes, $\sigma_i(t)$ for $i = 1,2$, follow the CIR model:
\begin{table}
\centering
\begin{threeparttable}[b]
\captionof{table}{Interior performance of the TKCV estimator}\label{tab:interperfTCV}
\begin{tabular}{lllcrrr}
\hline
\multicolumn{3}{r}{Gaussian kernel} & \multicolumn{2}{r}{\textbf{$\text{One-sided kernel}^*$}} & \multicolumn{2}{r}{Beta kernel} \\
\cline{2-7}
Data Frequency    & IMSE & ISB &  \textbf{IMSE} & \textbf{ISB} &  IMSE & ISB  \\
\hline
5 seconds      &  1.76  & 1.38 & \textbf{1.25} & \textbf{1.22}    & 2.34 &  1.75 \\
20 seconds     & 2.24 & 1.13   & \textbf{1.87}  & \textbf{1.34}  &2.13 & 2.03   \\
1 minute    &  3.76   & 1.45  & \textbf{2.31} &  \textbf{1.67}  & 3.54 &  2.43 \\
5 minutes       &  9.35   &  1.67  & \textbf{7.31}  & \textbf{1.35} & 3.52 & 6.67    \\
10 minutes     &  5.53 & 1.25 & \textbf{3.65} &  \textbf{7.38}  & 1.83 &  4.39 \\
\hline
\end{tabular}
\begin{tablenotes}\footnotesize
\item Note: Integrated mean squared error $(\times 10^{-5})$ and integrated squared bias $(\times 10^{-5})$.
\end{tablenotes}
\end{threeparttable}
\end{table}
\begin{equation}
d\sigma^2_i(t) = \kappa_i(\theta_i - \sigma^2_i(t))dt + \eta_i\sigma^2_i(t)dZ_i(t).
\end{equation}
As in simulations for Heston model without jumps we set $T=2$ (48 hours) and consider sampling frequencies 5 seconds, 30 seconds, 1 minute. We employ Euler discretization scheme from Kloeden and Platen \citeyearpar{KloedenPlaten} for the simulation. We simulate one trajectory of each $\{\sigma_i^2(t)\}$ and $\{J_i(t)\}$ for $i=1,2$ and keep them fixed. Then we run 500 repetitions of $(X_1(t),X_2(t))$. For each simulated path of the bivariate log asset price we compute $\widehat{TCV}$ based on sampling frequencies.
\begin{figure}[htb]
\begin{center}
\includegraphics[width=0.7\textwidth]{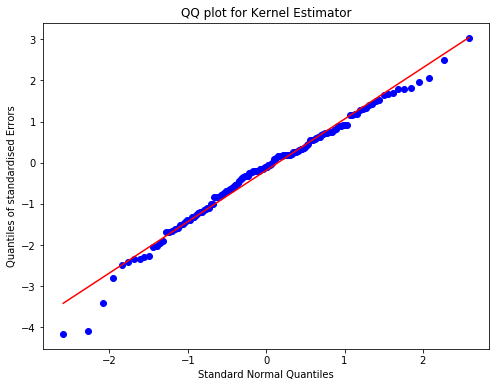}
\caption{QQ plot for observed standardized error terms of Threshold Kernel Covariance Estimator using minute-by-minute data}
\label{fig:TKC_QQ}
\end{center}
\end{figure}
We use two IMSE and ISB performance measurements in equation \eqref{eq:IMSE_ISB} for three different estimators: Gaussian, beta and one-sided kernel estimator. The results for the performance of the $\widehat{TCV}$ estimator are reported in Table \ref{tab:interperfTCV}. Figure \ref{fig:TKC_QQ} displays QQ plot for observed standardized error terms of Threshold Kernel Covariance Estimator using minute-by-minute data.

\section{Applications: Covariance Forecasting} \label{sec:Applications}
Forecasting covariance has an important economic value in the context of asset pricing and portfolio allocation. Multivariate GARCH model is a standard tool of modelling and forecasting covariances. However, the more recent approaches advocate the use of high-frequency data. 

Symitsi et al. \citeyearpar{Symitsi} undertake a comprehensive empirical comparison of two generic families of covariance forecasting models: multivariate GARCH models that employ daily data and models that use high-frequency and options data. The authors conclude that models based on high-frequency data offer both a clear advantage in terms of statistical accuracy and yield more theoretically consistent predictions leading to superior out-of-sample portfolio performance. In particular, a Vector Heterogeneous Autoregressive Model (VHAR) achieves the best performance out of the models under consideration. Motivated by this, we use the VHAR model to forecast the integrated covariance, however, when implementing for a finite sample, we use the kernel covariance estimator \eqref{eq:KCV_estimator} in Section \ref{sec:Estimators_covariance} instead of the realized covariance estimator of Barndorff-Nielsen and Shephar \citeyearpar{barndorff2004a}. 

Heterogeneous Autoregressive model (HAR), see Corsi \citeyearpar{Corsi}, was proposed as a simple way to approximate the long-memory behaviour of volatility. Vector HAR, implemented in Chiriac \citeyearpar{Chiriac}, is a multivariate extension of HAR. In the VHAR the realized covariance is expressed as a linear combination of past daily, weekly and monthly realized covariances:
\begin{equation}
RC_{t+1} = \alpha +\beta_dRC_t +\beta_wRC_{t-5:t} + \beta_mRC_{t-22:t} +\epsilon_{t+1}, \label{eq:VHAR}
\end{equation}
where $RC_t$ is obtained from Cholesky decomposition of realized covariance matrix. If $H_t$ is a matrix of realized covariances, its Cholesky decomposition gives $H_t = C_tC_t'$ and then $RC_t=vech(C_t)$. In order to allow
direct comparison among quantities defined over various time horizons, these multiperiod factors are normalized sums of the daily realized factors, i.e.
\begin{equation}
    RC_{t-k:t} = \frac{1}{k}\sum_{i=0}^{k-1}RC_{t-i}
\end{equation}
is the past $k$ day values of $RC$, $\alpha$ is a constant term and $\beta_d, \beta_w, \beta_m$ are, respectively, the parameters of daily, weekly and monthly components of the model. The covariance forecasts, $H_t$, are obtained by the reverse transformations of the $RC_t$'s. Modelling the Cholesky factors rather than covariances directly is done in order to avoid unnecessary restrictions that ensure positive definiteness.  

We simulate the log-prices of two assets and their volatilises using model \eqref{eq:HestonNoJump} in Section \ref{sec:simStudy}. Since we use simulated data, we have the true integrated covariance matrix and we propose to forecast the true covariance matrix using two measures of integrated covariance: standard in the literature realized covariance estimator of Barndorff-Nielsen and Shephard \citeyearpar{BandrofShephardRealizedVariance} and newly proposed kernel filtering of the covariance in equation \eqref{eq:KCV_estimator}. Thus we have two models for forecasting integrated covariance.  First model is VHAR model where we use the realized covariance as a measure of integrated covariance:  
\begin{eqnarray}
IC_{t+1} = \alpha +\beta_d RC_t +\beta_w RC_{t-5:t} + \beta_m RC_{t-22:t} +\epsilon_{t+1}, \label{eq:VHAR-Int}
\end{eqnarray}
where $IC$ is the half-vectorized Cholesky decomposition of the integrated covariance matrix.
\begin{table}
\centering
\resizebox{\columnwidth}{!}{
\begin{threeparttable}[b] 
\captionof{table}{The table reports the out of sample forecast loses for the 1-, 5-, 22-day horizons, respectively. The model with the lowes out-of-sample loss is market with asterisk (*).}\label{tab:VHAR}
\begin{tabular}{cccccccccc}
\hline
\multicolumn{1}{c}{} & \multicolumn{3}{c}{1-day horizon} & \multicolumn{1}{c}{}& \multicolumn{2}{c}{1-week horizon} & \multicolumn{1}{c}{} & \multicolumn{2}{c}{2-week horizon} \\
\cline{3-4} \cline{6-7} \cline{9-10}
          & & $VHAR$ & $VHAR-KCV^*$ & & $VHAR$ & $VHAR-KCV^*$ &  & $VHAR$ & $VHAR-KCV^*$   \\
\hline
$\alpha$  & & 0.3243 & 0.3213 & & 0.4987 & 0.4896 & & 0.4124 & 0.4126 \\
$\beta_d$ & & 0.6904 & 0.6064 & & 0.2443 & 0.2032 & & 0.2295 & 0.2175 \\
$\beta_w$ & & 0.6909 & 0.6028 & & 0.1765 & 0.1483 & & 0.2257 & 0.1591 \\
$\beta_m$ & & 0.8922 & 0.8374 & & 0.9007 & 0.7289 & & 0.5219 & 0.4328\\
&& & && & && & \\
$\mathcal{L}_E$ & & 0.1267 & 0.0529 & & 0.1831 & 0.0772 & & 0.2412 & 0.1841  \\
$\mathcal{L}_F$ & & 0.1387 & 0.0546 & & 0.1796 & 0.0797 & &  0.2981 & 0.1902 \\
$\mathcal{L}_Q$ & & -10.143 & -14.0537 & & -9.893 & -13.2624 & & -7.8503& -11.5561  \\ 
\hline
\end{tabular}
\end{threeparttable}
}
\end{table}
In light of this it is natural to define the VHAR-KCV model, in which we borrow the VHAR model above to predict the integrated covariance matrix, however we use kernel covariance estimator:
\begin{eqnarray}
IC_{t+1} = \alpha +\beta_d\widehat{KCV}_t +\beta_w\widehat{KCV}_{t-5:t} + \beta_m\widehat{KCV}_{t-22:t} +\epsilon_{t+1}, \label{eq:VHAR-KCV}
\end{eqnarray}
 where $\widehat{KCV}$ is the half-vectorized Cholesky decomposition of the  kernel covariance estimator in \eqref{eq:KCV_estimator}. We benchmark the VHAR-KCV against the VHAR.

In line with Symitsi et al. \citeyearpar{Symitsi} we evaluate forecasting ability of the the VHAR-KCV model \eqref{eq:VHAR-KCV} based on three multivariate loss functions and compare its performance to the performance of the benchmark VHAR model \eqref{eq:VHAR-Int}. We use the Euclidean loss function, $\mathcal{L}_E$, which is equally-weighted elements of the forecast error matrix; the Frobenius distance, $\mathcal{L}_F$, which is the extension of the mean squared error to the multivariate space and the multivariate quasi-likelihood loss function, $\mathcal{L}_Q$, which is scale invariant:
\begin{eqnarray}
&& \mathcal{L}_E = vech(\Sigma_t - H_t)'vech(\Sigma_t - H_t), \\
&& \mathcal{L}_F = Tr[(\Sigma_t - H_t)'(\Sigma_t - H_t)], \\
&& \mathcal{L}_Q = \log|H_t| +Tr(H_t^{-1}\Sigma_t). 
\end{eqnarray}
Here $Tr$ denotes the trace of square matrix, $\Sigma_t$ denotes the integrated covariance matrix at time $t$ and $H_t$ is time $t$ matrix of conditional covariance forcasts.

Results are reported in Tables \ref{tab:VHAR}. It is clear that for all forecasting horizons, the VHAR-KCV model outperforms the VHAR model which was shown to be the best model for forecasting covariance matrix in large study by Symitsi et al. \citeyearpar{Symitsi}.

\section{Concluding Remarks} \label{sec:Conclusion}
Inspired by the kernel filtering of spot volatility, in this paper we develop estimators of spot covariances for two types of the underlying price process: continuous and discontinuous semimartingales. We show the asymptotic normality of the estimators. An important result is that we are able to attain the convergence rate for both estimators, which is $\sqrt{n}$. The convergence rate of spot covariance matrix estimator for continuous martingales in a setup with microstructure noise proposed by Bibinger et al. \citeyearpar{Bibinger} is, in turn,  $\sqrt{n^{4}}$. In financially realistic scenarios, we conduct Monte Carlo experiments to study the finite sample properties of our estimators. In addition, we investigate one of the possible applications of the estimator, the forecasting of covariance matrix. We conclude that our estimator performs better in the context of forecasting than the benchmark realized covariance estimator of Barndorff-Nielsen and Shephard \citeyearpar{barndorff2004a}. One of the possible extensions of the estimators is to consider a market-microstructure noise.


\pagebreak
\clearpage

\begin{appendices} 
\section{Proof of Theorem \ref{theorem1}} \label{appendix:proofTheorem1}
\subsection{Notation}\label{notation}
In a similar way to Barndorff-Nielsen and Shephard \citeyearpar{barndorff2004a} for the purpose of simplifying the proof we will use the index (or equivalently, tensor) notation instead of vector or matrix notation. We rewrite the $d$ stochastic processes $X_{k}$ $(k=1,\cdots, d)$ in equation \eqref{stochasticProcess} in index notation as
\begin{equation}
X_{k}(t)=\int_0^t\mu_{k}(s)ds + \int_0^t\theta_{k}^a(u)dW_{a}(s), \label{stochasticProcessTensor}
\end{equation}
with initial condition $X_{k}(0)=0$. Here
\begin{equation}
\Theta(t)=\{ \theta_{(k)}^a(t) \}_{k,a=1,2,\cdots,d}. \nonumber
\end{equation}
In the index notation the Einstein summation convention is used, which means if an index variable appears twice in a single expression then it implies  summation over that index. Thus \eqref{stochasticProcessTensor} is understood to mean 
\begin{equation}
X_{k}(t)=\int_0^t\mu_{(k)}(s)ds + \sum_{a=1}^d\int_0^t\theta_{(k)}^a(s)dW_{a}(s). 
\end{equation}
We apply summation convention to indices $a,b,c,d$, but not to indices $k,l,k',l'$, unless otherwise specified. 
Furthermore, we write
\begin{equation}
\theta_{kl}^{ab}=\theta_{k}^a\theta_{l}^{b}, \label{noEnsteinSummation}
\end{equation}
with similar notation for other index combination. In \eqref{noEnsteinSummation} no superscripts or subscripts are repeated and so no summation operator is generated. Combining the Einstein summation convention and the notional rule for $\theta_{kl}^{ab}$, the $(k,l)$th element of the spot covalatility matrix of model \eqref{stochasticProcess}  is 
\begin{equation}\label{eq:spotcovkl}
\Sigma_{kl}(t)=\theta_{kl}^{aa}=\sum_{a=1}^d\theta_{k}^a(t)\theta_{l}^a(t).
\end{equation}

\subsection{Mean and variances}
The proof of Theorem \ref{theorem1} consists of several steps. First step is to derive the means and covariances of the variates 
\begin{eqnarray}
\widehat{KCV}_{kl}(\tau)&=&\sum_{i=1}^n K_h(t_{i-1}-\tau)\Delta X_{k}(t_{i-1})\Delta X_{l}(t_{i-1})\\
&=&\sum_{i=1}^nK_h(t_{i-1}-\tau)\left(X_{k}(t_i) - X_{k}(t_{i-1})\right)\left(X_{l}(t_i)- X_{l}(t_{i-1})\right),
\label{KCVkl1}
\end{eqnarray}
with $k,l=1,2,\cdots,d$.
Next, the Theorem \ref{theorem1} is proved for the case, where the mean processes $\mu_{k}$ $(k=1,\cdots,d)$ are identically $0$. Finally, the latter restriction is lifted. The proof is component-wise and based on the results and techniques employed by Barndorff-Nielsen and Shephard \citeyearpar{barndorff2004a} and Kristensen \citeyearpar{Kristensen}.\\
We start by computing the expectation of $\widehat{KCV}_{kl}(\tau)$ in equation \eqref{KCVkl1}.
\begin{eqnarray} \nonumber
\mathrm{E}\left[ \widehat{KCV}_{kl}(\tau) \right] &=& \mathrm{E}\left[ \sum_{i=1}^nK_h(t_{i-1}-\tau)\left(X_{k}(t_i) - X_{k}(t_{i-1})\right)\left(X_{l}(t_i)- X_{l}(t_{i-1})\right)\right]\\ \nonumber
&=&\sum_{i=1}^nK_h(t_{i-1}-\tau)\mathrm{E}\left[\left(X_{k}(t_i) - X_{k}(t_{i-1})\right)\left(X_{l}(t_i)- X_{l}(t_{i-1})\right)\right]\\
&=&\sum_{i=1}^nK_h(t_{i-1}-\tau)\int_{t_{i-1}}^{t_i}\theta_{kl}^{aa}(s)ds,
\end{eqnarray}
where the final equation is due to the results of Barndorff-Nielsen and Shephard \citeyearpar{barndorff2004a}:
\begin{equation}
\mathrm{E}\left[ \Delta X_k(t_{i-1})\Delta X_l(t_{i-1}) \right] = \int_{t_{i-1}}^{t_i}\theta_{kl}^{aa}(s)ds. 
\end{equation}
Next, we apply Lemma \ref{lemma1} and have
\begin{equation} \label{eq:integralSumKernel}
\sum_{i=1}^n K_h(t_{i-1}-\tau)\int_{t_{i-1}}^{t_i}\theta_{kl}^{aa}(s)ds = \int_0^TK_h(s-\tau)\theta_{kl}^{aa}(s)ds + o(\sqrt{\delta}).
\end{equation}
Thus
\begin{equation} \label{eq:expectationOfKernel}
\mathrm{E}\left[ \widehat{KCV}_{kl}(\tau) \right]=\int_0^TK_h(s-\tau)\theta_{kl}^{aa}(s)ds.
\end{equation}
In order to compute the covariance of $\widehat{KCV}_{kl}(\tau)$ in equation \eqref{KCVkl1} we use the following results from Barndorff-Nielsen and Shephard \citeyearpar{barndorff2004a}:
\begin{eqnarray}\label{eq:covofreal}
\text{Cov}\left\lbrace \left[ \Delta X_{k}(t_{i-1})\Delta X_{l}(t_{i-1} \right], \left[ \Delta X_{k'}(t_{i-1})\Delta X_{l'}(t_{i-1}\right]\right\rbrace = \\
\int_{t_{i-1}}^{t_i}\theta_{kk'}^{aa}(s)ds\int_{t_{i-1}}^{t_i}\theta_{kl'}^{cc}(s)ds +\int_{t_{i-1}}^{t_i}\theta_{kl'}^{aa}(s)ds\int_{t_{i-1}}^{t_i}\theta_{lk'}^{cc}(s)ds.
\end{eqnarray}
Now, using the definition of covariance and equations \eqref{eq:expectationOfKernel}, \eqref{eq:integralSumKernel} and \eqref{eq:covofreal} we have
\begin{eqnarray*}
&& \text{Cov}\left\lbrace  \widehat{KCV}_{kl}(\tau),\widehat{KCV}_{k'l'}(\tau) \right\rbrace \\
&=&\text{Cov} \left\lbrace [\sum_{i=1}^n K_h(t_{i-1}-\tau) \Delta X_{k}(t_{i-1})\Delta X_{l}(t_{i-1})], [\sum_{i=1}^n K_h(t_{i-1}-\tau) \Delta X_{k'}(t_{i-1})\Delta X_{l'}(t_{i-1})] \right\rbrace \\
&=&\mathrm{E} \lbrace [\sum_{i=1}^n K_h(t_{i-1}-\tau)\Delta X_{k}(t_{i-1})\Delta X_{l}(t_{i-1}) - \int_0^TK_h(s - \tau)\theta^{aa}_{kl}(s)ds] \\
&\times& [\sum_{i=1}^n K_h(t_{i-1}-\tau)\Delta X_{k'}(t_{i-1})\Delta X_{l'}(t_{i-1}) - \int_0^TK_h(s - \tau)\theta^{cc}_{k'l'}(s)ds] \rbrace \\
&=&\sum_{i=1}^nK_h^2(t_{i-1}-\tau)\lbrace \int_{t_{i-1}}^{t_{i}}\theta^{aa}_{kk'}(s)ds\int_{t_{i-1}}^{t_{i}}\theta^{cc}_{ll'}(s)ds + \int_{t_{i-1}}^{t_{i}}\theta^{aa}_{kl'}(s)ds\int_{t_{i-1}}^{t_{i}}\theta^{cc}_{lk'}(s)ds\rbrace.
\end{eqnarray*}
Apply Lemma \ref{lemma1} and invoke Riemann integration
\begin{eqnarray}
\delta^{-1}\sum_{i=1}^nK_h^2(t_{i-1}-\tau)\left\lbrace\int_{t_{i-1}}^{t_{i}}\theta^{aa}_{kk'}(s)ds\int_{t_{i-1}}^{t_{i}}\theta^{cc}_{ll'}(s)ds + \int_{t_{i-1}}^{t_{i}}\theta^{aa}_{kl'}(s)ds\int_{t_{i-1}}^{t_{i}}\theta^{cc}_{lk'}(s)ds\right\rbrace \nonumber\\
\to \int_0^T K_h^2(s-\tau)\Omega_{kl,k'l'}(s)ds, \nonumber
\end{eqnarray}
where $\Omega(s)$ is given in equation \eqref{Omega1}.
\subsection{Asymptotic normality}\label{profTh1:AsympNorm}			
To prove the results of Theorem \ref{theorem1} in the case where the mean processes $\mu_k$ are identically $0$, we apply Cramer-Wold device, i.e. it suffices to show that for any real constants $a^{kl}$ we have, as $\delta \to 0$
\begin{equation}
\frac{1}{\sqrt{\delta}}(a^{kl}(\widehat{KCV}_{kl}(\tau) - \int_0^TK_h(s-\tau)\theta_{kl}^{aa}(s)ds) \overset{\mathcal{L}}{\to} \mathcal{N}(0,a^{kl}a^{k'l'}(\int_0^TK_h^2(s-\tau)\Omega_{kl,k'l'}(s)ds)).\label{asympDist}
\end{equation} 
We rewrite \eqref{asympDist} as
\begin{eqnarray}
\frac{1}{\sqrt{\delta}}\sum_{i=1}^n a^{kl}(K_h(t_{i-1}-\tau)\Delta X_k(t_{i-1}) \Delta X_l(t_{i-1}) - \int_{t_{i-1}}^{t_i}K_h(s-\tau)\theta^{aa}_{kl}(s)ds) \nonumber \\ 
\overset{\mathcal{L}}{\to} \mathcal{N}(0,a^{kl}a^{k'l'}\sum_{i=1}^n\int_{t_{i-1}}^{t_i}K_h^2(s-\tau)\Omega_{kl,k'l'}(s)ds).
\end{eqnarray}
Here we apply the Einstein summation convention also to the indices $k,l$. By the above calculations, 
\begin{eqnarray}
\text{Var} \left\lbrace \sum_{i=1}^n a^{kl}(K_h(t_{i-1}-\tau)\Delta X_k(t_{i-1}) \Delta X_l(t_{i-1}) - \int_{t_{i-1}}^{t_i}K_h(s-\tau)\theta^{aa}_{kl}(s)ds)\right\rbrace \nonumber \\ 
\to a^{kl}a^{k'l'}\sum_{i=1}^n\int_{t_{i-1}}^{t_i}K_h^2(s-\tau)\Omega_{kl,k'l'}(s)ds.
\end{eqnarray}
Now we apply the results of Linderberg-Feller Central Limit Theorem for triangular arrays of independent random variables $x_{n1}, \cdots, x_{nk_n} (n=1,2,\cdots, i=1,2,\cdots,k_n$ with $\ k_n \to \infty)$ and let $x_n=x_{n1}+ \cdots + x_{nk_n}$. We state the Corollary 3 from Barndorff-Nielsen and Shephard \citeyearpar{barndorff2004a} below.
\begin{cor}
Suppose that $\mathrm{E}[y_{ni}]=0$ for all $n$ and $i$ and there exists a nonnegative number $v$ that $\text{Var}[x_n] \to v$ for $n \to \infty$. Then 
\begin{equation}
y_n \overset{\mathcal{L}}{\to}\mathcal{N}(0,v)
\end{equation}
if and only if 
\begin{equation} \label{LFcondition}
\sum_{i=1}^{k_n}\mathrm{E}[y_{ni}^2\mathbf{1}_{(\psi,\infty)}(|y_{ni}|)] \to 0 \ \ \text{as} \ \ n\to \infty
\end{equation}
for an arbitrary $\psi >0$.
\end{cor}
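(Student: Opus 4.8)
The statement is the Lindeberg--Feller central limit theorem for triangular arrays, and the plan is to establish both implications through characteristic functions. Write $\sigma_{ni}^2 = \mathrm{E}[y_{ni}^2] = \text{Var}(y_{ni})$, so that $\text{Var}(y_n) = \sum_{i=1}^{k_n}\sigma_{ni}^2 \to v$, and let $\phi_{ni}(t) = \mathrm{E}[e^{\mathrm{i}ty_{ni}}]$ be the characteristic function of $y_{ni}$; by row-wise independence $\phi_n(t) = \prod_{i=1}^{k_n}\phi_{ni}(t)$ is the characteristic function of $y_n$. Since $\mathcal{N}(0,v)$ has characteristic function $e^{-vt^2/2}$, by Lévy's continuity theorem the whole corollary reduces to characterising, in terms of \eqref{LFcondition}, when $\phi_n(t)\to e^{-vt^2/2}$ for every $t$.

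For sufficiency (condition \eqref{LFcondition} implies $y_n\overset{\mathcal{L}}{\to}\mathcal{N}(0,v)$) I would first record that \eqref{LFcondition} forces uniform asymptotic negligibility: splitting $\sigma_{ni}^2 = \mathrm{E}[y_{ni}^2\mathbf{1}_{[0,\psi]}(|y_{ni}|)] + \mathrm{E}[y_{ni}^2\mathbf{1}_{(\psi,\infty)}(|y_{ni}|)]$ gives $\max_i\sigma_{ni}^2 \le \psi^2 + \sum_{i=1}^{k_n}\mathrm{E}[y_{ni}^2\mathbf{1}_{(\psi,\infty)}(|y_{ni}|)]$, and letting $n\to\infty$ then $\psi\to0$ yields $\max_i\sigma_{ni}^2\to0$. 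The core estimate uses the Taylor bound $|e^{\mathrm{i}x}-1-\mathrm{i}x+x^2/2|\le\min(|x|^3/6,\,x^2)$ applied with $x=ty_{ni}$: since $\mathrm{E}[y_{ni}]=0$, $\phi_{ni}(t)-(1-\tfrac12 t^2\sigma_{ni}^2) = \mathrm{E}[e^{\mathrm{i}ty_{ni}}-1-\mathrm{i}ty_{ni}+\tfrac12 t^2 y_{ni}^2]$, and bounding the $|x|^3$ term on $\{|y_{ni}|\le\psi\}$ and the $x^2$ term on $\{|y_{ni}|>\psi\}$ gives $\sum_i|\phi_{ni}(t)-(1-\tfrac12 t^2\sigma_{ni}^2)| \le \tfrac16 t^3\psi\sum_i\sigma_{ni}^2 + t^2\sum_i\mathrm{E}[y_{ni}^2\mathbf{1}_{(\psi,\infty)}(|y_{ni}|)]$, which is small by \eqref{LFcondition} and the arbitrariness of $\psi$. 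Finally the product bound $|\prod a_i - \prod b_i|\le\sum|a_i-b_i|$ for complex numbers of modulus at most one (valid for large $n$ by negligibility), combined with $\prod_i(1-\tfrac12 t^2\sigma_{ni}^2)\to e^{-vt^2/2}$ (justified by $\sum_i\sigma_{ni}^2\to v$ and $\max_i\sigma_{ni}^2\to0$), delivers $\phi_n(t)\to e^{-vt^2/2}$.

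For necessity I would assume $y_n\overset{\mathcal{L}}{\to}\mathcal{N}(0,v)$ together with the uniform asymptotic negligibility $\max_i\sigma_{ni}^2\to0$ proper to this array setting, and deduce \eqref{LFcondition}. Negligibility gives $|\phi_{ni}(t)-1|\le\tfrac12 t^2\sigma_{ni}^2\to0$ uniformly in $i$, so $\log\phi_n(t)=\sum_i\log\phi_{ni}(t)=\sum_i(\phi_{ni}(t)-1)+o(1)$; matching against $\log e^{-vt^2/2}=-\tfrac12 vt^2$ and taking real parts yields $\sum_i\mathrm{E}[1-\cos(ty_{ni})]\to\tfrac12 vt^2$. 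Subtracting this from $\tfrac12 t^2\sum_i\sigma_{ni}^2\to\tfrac12 vt^2$ shows that the nonnegative sum $\sum_i\mathrm{E}[\tfrac12 t^2 y_{ni}^2 - (1-\cos ty_{ni})]\to0$. Choosing $t$ with $t\psi\ge3$ and using $\tfrac12 u^2-(1-\cos u)\ge\tfrac14 u^2$ for $|u|\ge3$ gives $\tfrac14 t^2\sum_i\mathrm{E}[y_{ni}^2\mathbf{1}_{(\psi,\infty)}(|y_{ni}|)]\le\sum_i\mathrm{E}[\tfrac12 t^2 y_{ni}^2-(1-\cos ty_{ni})]\to0$, which is precisely \eqref{LFcondition}.

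The main obstacle is the necessity (Feller) direction: it genuinely relies on $\max_i\sigma_{ni}^2\to0$, which, unlike in the sufficiency proof, cannot be extracted from the stated hypotheses and must be supplied by the triangular-array structure. Converting the convergence of $\phi_n$ into control of the individual tails requires the uniform $\log\phi_{ni}(t)\approx\phi_{ni}(t)-1$ comparison and the cosine lower bound on a range of $t$ calibrated to $\psi$; this bookkeeping, rather than any single inequality, is the delicate part. The sufficiency direction, by contrast, is the one actually invoked in the proof of Theorem~\ref{theorem1}, where the $y_{ni}$ are the centred kernel-weighted increments and \eqref{LFcondition} is verified directly.
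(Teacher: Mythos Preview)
The paper does not prove this corollary at all: it is quoted verbatim as ``Corollary 3 from Barndorff-Nielsen and Shephard (2004a)'' and used as a black-box input to the proof of Theorem~\ref{theorem1}, so there is no paper proof to compare against. Your characteristic-function argument is the standard textbook route to Lindeberg--Feller and the sufficiency direction is correctly sketched; this is also the only direction the paper actually uses, since it proceeds to verify a Lyapunov condition for the specific $y_{ni}$.

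Your remark about the necessity direction is well taken and worth flagging: as stated, the corollary claims an ``if and only if'' but does not include the Feller condition $\max_i \sigma_{ni}^2 \to 0$ among its hypotheses. Without it the converse fails (take $k_n=1$, $y_{n1}\sim\mathcal N(0,v)$: then $y_n\overset{\mathcal L}{\to}\mathcal N(0,v)$ trivially yet the Lindeberg sum does not vanish). So the gap you identify is in the \emph{statement} as reproduced in the paper, not in your argument; your proof supplies the missing hypothesis explicitly and then carries the Feller half through correctly.
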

A Lyapunov condition is sufficient for \eqref{LFcondition}, that is
\begin{equation}
\sum_{i=1}^{k_n}\mathrm{E}[|y_{ni}|^{2+\epsilon}] \to 0
\end{equation}
for some $\epsilon >0$. Now, let
\begin{equation}
y_{ni}=\frac{1}{\sqrt{\delta}}a^{kl}\lbrace K_h(t_{i-1}-\tau)\Delta X_k(t_{i-1}) \Delta X_l(t_{i-1}) - \int_{t_{i-1}}^{t_i}K_h(s-\tau)\theta^{aa}_{kl}(s)ds \rbrace,
\end{equation}
we find 
\begin{eqnarray}
y_{ni} &\overset{\mathcal{L}}{=}& \frac{1}{\sqrt{\delta}}a^{kl} \lbrace K_h(t_{i-1}-\tau)\sqrt{\int_{t_{i-1}}^{t_i}\theta^{aa}_{kk}(s)ds}\sqrt{\int_{t_{i-1}}^{t_i}\theta_{ll}^{cc}(s)ds}U_{kj}U_{lj} \nonumber\\
&&  \ \ \ \ \ \ \ \ \ \ \ \ \ \ \ \ \ \ \ \ \ \ \ \ \ \ \  \ \ \  \ \ \  \ \ \  \ \ \ -\int_{t_{i-1}}^{t_i}K_h(s-\tau)\theta^{aa}_{kl}(s)ds   \rbrace \nonumber \\
&\overset{\mathcal{L}}{=}&\frac{1}{\sqrt{\delta}}a^{kl} \lbrace K_h(t_{i-1}-\tau)\sqrt{\int_{t_{i-1}}^{t_i}\theta^{aa}_{kk}(s)ds}\sqrt{\int_{t_{i-1}}^{t_i}\theta^{cc}_{ll}(s)ds}U_{kj}U_{lj} \nonumber \\
&& \ \ \ \ \ \ \ \ \ \ \ \ \ \ \ \ \ \ \ \ \ \ \ \  \ \ \  \ \ \  \ \ \  \ \ \ -K_h(t_{i-1}-\tau)\int_{t_{i-1}}^{t_i}\theta^{aa}_{kl}(s)ds \rbrace \nonumber \\
&\overset{\mathcal{L}}{=}& \frac{1}{\sqrt{\delta}} \lbrace a^{kl}K_h(t_{i-1}-\tau)(\sqrt{\int_{t_{i-1}}^{t_i}\theta^{aa}_{kk}(s)ds}\sqrt{\int_{t_{i-1}}^{t_i}\theta^{cc}_{ll}(s)ds}(U_{kj}U_{lj}-\rho_{kl})\rbrace \nonumber \\
&\overset{\mathcal{L}}{=}&\sqrt{\delta}a^{kl}\lbrace K_h(t_{i-1}-\tau)\sqrt{\widehat{\Gamma}_{ki}\widehat{\Gamma}_{li}}(U_{kj}U_{lj}-\rho_{kl})\rbrace, \label{uniformbound}
\end{eqnarray}
where 
\begin{equation}
\tilde{\Gamma}_{ki}=\frac{1}{\delta}\int_{t_{i-1}}^{t_i}\theta^{aa}_{kk}(s)ds
\end{equation}
\begin{equation}
\end{equation}
and
\begin{equation}
\rho_{kl}= \frac{\int_{t_{i-1}}^{t_i}\theta^{aa}_{kl}(s)ds)}{\sqrt{\int_{t_{i-1}}^{t_i}\theta^{cc}_{kk}(s)ds}\sqrt{\int_{t_{i-1}}^{t_i}\theta^{dd}_{ll}(s)ds}}
\end{equation}
is the correlation coefficient between $U_{k}$ and $U_{l}$. By our Assumption on the process $\Sigma$, as $\delta$ varies the quantities $\Gamma$ are bounded away from $0$ and infinity, uniformly in $k$ and $j$. This implies that 
\begin{equation}
\mathrm{E}[|a^{kl}K_h(t_{i-1}-\tau)\sqrt{\widehat{\Gamma}_{ki}\widehat{\Gamma}_{li}}(U_{kj}U_{lj}-\rho_{kl}))|^{2+\epsilon}]
\end{equation}
is uniformly bounded above, and hence, by \eqref{uniformbound}, we have 
\begin{equation}
\sum_{i=1}^n\mathrm{E}[|y_{ni}|^{2+\epsilon}]\to 0
\end{equation}
as to be shown.
Next, we show that the effect of a nonzero drift term is negligible: 
\begin{eqnarray} \nonumber
\widehat{KCV}_{kl}(\tau) - \widehat{KCV}^*_{kl}(\tau)=\sum_{i=1}^nK_h(t_{i-1}-\tau)(\int_{t_{i-1}}^{t_i}\mu_k(s)ds)(\int_{t_{i-1}}^{t_i}\mu_l(s)ds) \\ \nonumber
+ \sum_{i=1}^nK_h(t_{i-1}-\tau)(\int_{t_{i-1}}^{t_i}\mu_k(s)ds)(\int_{t_{i-1}}^{t_i}\theta_l(s)dW(s)) \\
+ \sum_{i=1}^nK_h(t_{i-1}-\tau)(\int_{t_{i-1}}^{t_i}\mu_l(s)ds)(\int_{t_{i-1}}^{t_i}\theta_k(s)dW(s)). \label{eq:drifTerm}
\end{eqnarray}
By Lemma \ref{lemma1} the first term in equation \eqref{eq:drifTerm} is
\begin{eqnarray}
\sum_{i=1}^nK_h(t_{i-1}-\tau)(\int_{t_{i-1}}^{t_i}\mu_k(s)ds)(\int_{t_{i-1}}^{t_i}\mu_l(s)ds) \nonumber \\
=\delta\int_0^TK_h(s-\tau)\mu_k(s)\mu_l(s)ds + o(\delta).
\end{eqnarray}
The second term is
\begin{eqnarray*}
\sum_{i=1}^nK_h(t_{i-1}-\tau)(\int_{t_{i-1}}^{t_i}\mu_k(s)ds)(\int_{t_{i-1}}^{t_i}\theta_l(s)dW(s))\\ 
\sim\mathcal{N}\left( 0, \sum_{i=1}^nK^2_h(t_{i-1}-\tau)(\int_{t_{i-1}}^{t_i}\mu_k(s)ds)^2\int_{t_{i-1}}^{t_i}\theta_{ll}^{cc}(s)ds\right)
\end{eqnarray*}
and, similarly, the third term
\begin{eqnarray*}
\sum_{i=1}^nK_h(t_{i-1}-\tau)(\int_{t_{i-1}}^{t_i}\mu_l(s)ds)(\int_{t_{i-1}}^{t_i}\theta_k(s)dW(s))\\ 
\sim\mathcal{N}\left( 0, \sum_{i=1}^nK^2_h(t_{i-1}-\tau)\int_{t_{i-1}}^{t_i}\mu_l(s)ds)^2\int_{t_{i-1}}^{t_i}\theta_{kk}^{aa}(s)ds\right),
\end{eqnarray*}
where
\begin{eqnarray*}
\sum_{i=1}^nK^2_h(t_{i-1}-\tau)(\int_{t_{i-1}}^{t_i}\mu_k(s)ds)^2\int_{t_{i-1}}^{t_i}\theta_{ll}^{cc}(s)ds \\
\leq \delta \sup_{s}\theta_{ll}^{cc}(s)\times \sum_{i=1}^nK^2_h(t_{i-1}-\tau)(\int_{t_{i-1}}^{t_i}\mu_k(s)ds)^2\\
= \delta^2 \sup_{s}\theta_{ll}^{cc}(s)\times \left( \int_0^TK^2_h(s-\tau)\mu_k^2(s)ds +o(1) \right)
\end{eqnarray*}
and 
\begin{eqnarray*}
\sum_{i=1}^nK^2_h(t_{i-1}-\tau)(\int_{t_{i-1}}^{t_i}\mu_l(s)ds)^2\int_{t_{i-1}}^{t_i}\theta_{kk}^{aa}(s)ds \\
\leq \delta \sup_{s}\theta_{kk}^{aa}(s)\times \sum_{i=1}^nK^2_h(t_{i-1}-\tau)(\int_{t_{i-1}}^{t_i}\mu_l(s)ds)^2\\
= \delta^2 \sup_{s}\theta_{kk}^{aa}(s)\times \left( \int_0^TK^2_h(s-\tau)\mu_l^2(s)ds +o(1) \right).
\end{eqnarray*}

\section{Proof of Theorem \ref{theorem2}}\label{appendix:proofTheorem2}
The convergence results in the proof of Theorem \ref{theorem1} still holds when $h\to 0$. Now, we consider the shrinking bandwidth, $h\to 0$, and we derive the means and covariances of the varieties
\begin{eqnarray}
\hat{\Sigma}_{kl}(\tau)&=&\sum_{i=1}^n K_h(t_{i-1}-\tau)\Delta X_{k}(t_{i-1})\Delta X_{l}(t_{i-1})\\
&=&\sum_{i=1}^nK_h(t_{i-1}-\tau)\left(X_{k}(t_i) - X_{k}(t_{i-1})\right)\left(X_{l}(t_i)- X_{l}(t_{i-1})\right).
\label{eq:spotSigmakl1}
\end{eqnarray}
Following the proof of Theorem \ref{theorem1} and applying Lemma \ref{lemma2} we obtain:
\begin{equation*}
\sum_{i=1}^n K_h(t_{i-1} -t)\int_{t_i-1}^{t_i}\theta_{kl}^{aa}(s)d ds = \theta_{kl}^{aa} + h^{m+\gamma}\mathcal{K}(\tau,0)\int_\mathbb{R}K(z)z^{m+\gamma}dz + O\left(\frac{\delta}{h}\right) + O(h^{m+\gamma})
\end{equation*}
where $\mathcal{K}(\tau,0)$ denotes "Lipschitz coefficient" of $\theta_{kl}(s)$. Thus we have:
\begin{equation} \label{eq:expectationOfSpotKernel}
\mathrm{E}\left[ \hat{\Sigma}_{kl}(\tau) \right]= \Sigma_{kl}(\tau).
\end{equation}
For deriving the covariance of the varieties in \eqref{eq:spotSigmakl1} we use the following result from proof of theorem \ref{theorem1}:
\begin{eqnarray*}
&&\text{Cov}\left\lbrace  \hat{\Sigma}_{kl}(\tau),\hat{\Sigma}_{k'l'}(\tau) \right\rbrace = \\
&&\sum_{i=1}^nK_h^2(t_{i-1}-\tau)\lbrace \int_{t_{i-1}}^{t_{i}}\theta^{aa}_{kk'}(s)ds\int_{t_{i-1}}^{t_{i}}\theta^{cc}_{ll'}(s)ds + \int_{t_{i-1}}^{t_{i}}\theta^{aa}_{kl'}(s)ds\int_{t_{i-1}}^{t_{i}}\theta^{cc}_{lk'}(s)ds\rbrace.
\end{eqnarray*}
Now we using Lemma \ref{lemma2} and invoking Riemann integration we obtain:
\begin{eqnarray}
\delta^{-1}h\sum_{i=1}^nK_h^2(t_{i-1}-\tau)\left\lbrace\int_{t_{i-1}}^{t_{i}}\theta^{aa}_{kk'}(s)ds\int_{t_{i-1}}^{t_{i}}\theta^{cc}_{ll'}(s)ds + \int_{t_{i-1}}^{t_{i}}\theta^{aa}_{kl'}(s)ds\int_{t_{i-1}}^{t_{i}}\theta^{cc}_{lk'}(s)ds\right\rbrace \nonumber\\
\to \Omega_{kl,k'l'}(\tau)\int_{\mathbb{R}} K^2(z)dz, \nonumber
\end{eqnarray}
where $\Omega_{k,l,k',l'}(\tau)$ is defined in equation \eqref{Omega1}. One can easily show the asymptotic normality by following Section \ref{profTh1:AsympNorm} in the proof of Theorem \ref{theorem1} by applying Cramer-Wold device, i.e. to show that for any real constants $a^{kl}$ we have, as $\delta \to 0$ and $h \to 0$:
\begin{equation}
{\sqrt{\delta^{-1}h}}(a^{kl}(\widehat{\Sigma}_{kl}(\tau) - \Sigma_{kl}(\tau)) \overset{\mathcal{L}}{\to} \mathcal{N}(0,a^{kl}a^{k'l'}\Omega_{kl,k'l'}(\tau)\int_{\mathbb{R}} K^2(z)dz).\label{spotasympDist}
\end{equation} 

\section{Proof of Theorem \ref{theorem3}} \label{appendix:proofTheorem3}
Here we follow the notation in section \ref{notation} with $\Sigma_{kl}(t)$ denoting the $(k,l)$-th element of spot covariance matrix at time $t$ (see equation \eqref{eq:spotcovkl}). We first derive the asymptotic distribution of elements $(\Sigma_{kl})_{k,l=1,...d}$ of the covariance matrix by following Yu et al. \citeyearpar{YuFang} and then using Cram\'{e}r-Wold theorem prove multivariate convergence in distribution using univariate results.

Let $\widehat{TCV}_{kl}$ denote the $(k,l)$-th component of the estimator and $X^*$ denote the diffusion part of $X$. So, we have 
\begin{eqnarray} \label{eq:jumpproof}
&&\sqrt{n}\frac{\widehat{TCV}_{kl}-\int_0^TK_h(s-t)\Sigma_{kl}(s)ds}{\sqrt{\int_0^TK^2_h(s-t)\Omega_{kl,k'l'}(s)ds}} \nonumber \\
&=& \sqrt{n}\frac{\sum_{i=1}^nK_h(t_{i-1}-t)\Delta_{i-1}X^*_{kl}\Delta_{i-1}X^*_{k'l'}\mathbbm{1}_{\{ \Delta_{i-1}N=0 \} } - \int_0^TK_h(s-t)\Sigma_{kl}(s)ds}{\sqrt{\int_0^TK^2_h(s-t)\Omega_{kl,k'l'}(s)ds}} \nonumber \\
&=& \sqrt{n}\frac{\sum_{i=1}^nK_h(t_{i-1}-t)\Delta_{i-1}X^*_{kl}\Delta_{i-1}X^*_{k'l'} - \int_0^TK_h(s-t)\Sigma_{kl}(s)ds}{\sqrt{\int_0^TK^2_h(s-t)\Omega_{kl,k'l'}(s)ds}} \nonumber \\
&-& \sqrt{n} \frac{\sum_{i=1}^nK_h(t_{i-1}-t)\Delta_{i-1}X^*_{kl}\Delta_{i-1}X^*_{k'l'}\mathbbm{1}_{\{ \Delta_{i-1}N\neq 0 \} }}{\sqrt{\int_0^TK^2_h(s-t)\Omega_{kl,k'l'}(s)ds}}.
\end{eqnarray}
The first term in equation \eqref{eq:jumpproof} for the fixed $h$, as $\delta \to 0$ is
\begin{equation}
\sqrt{n}\frac{\sum_{i=1}^nK_h(t_{i-1}-t)\Delta_{i-1}X^*_{kl}\Delta_{i-1}X^*_{k'l'} - \int_0^TK_h(s-t)\Sigma_{kl}(s)ds}{\sqrt{\int_0^TK^2_h(s-t)\Omega_{kl,k'l'}(s)ds}} \overset{\mathcal{L}}{\to} N(0,1).
\end{equation}
Now, the assumption \ref{assump4} states that the kernel $K$ is bounded $|K_h(t_{i-1}-t_i)|\leq \Lambda/h$ for some constant $\Lambda$. the number of jumps occurring over the interval $[0,T]$ is finite. Then the second term in equation \eqref{eq:jumpproof}
\begin{eqnarray}
\sqrt{n}\frac{ \left| \sum_{i=1}^nK_h(t_{i-1}-t)\Delta_{i-1}X^*_{kl}\Delta_{i-1}X^*_{k'l'}\mathbbm{1}_{\{ \Delta_{i-1}N\neq 0 \} } \right|}{\sqrt{\int_0^TK^2_h(s-t)\Omega_{kl,k'l'}(s)ds}} &\leq \\
\sqrt{n}\frac{\frac{\Lambda}{h} \sum_{i=1}^n \Delta_{i-1}X^*_{kl}\Delta_{i-1}X^*_{k'l'}\mathbbm{1}_{\{ \Delta_{i-1}N\neq 0 \} } }{\sqrt{\int_0^TK^2_h(s-t)\Omega_{kl,k'l'}(s)ds}} \nonumber &\leq \\
\sqrt{n}\frac{N_T \times \frac{\Lambda}{h} \times \sup \int_{t_{i-1}}^{t_i}\theta_{kl}(s)dW_s \int_{t_{i-1}}^{t_i}\theta_{k'l'}(s)dB_s}{\sqrt{\int_0^TK^2_h(s-t)\Omega_{kl,k'l'}(s)ds}} \nonumber.
\end{eqnarray}
Here the integral $\int_0^t\theta_{kl}(s)dW_s$, $\int_{0}^{t}\theta_{k'l'}(s)dB_s$ are time changed Brownian motion and by the Levy law of the modulus of continuity of Brownian motion's path Karatzas and Shreve \citeyearpar{KaratzasShreve}, for small $s$ we have
\begin{equation}
\sup_{i\in{1,...,n}}\frac{\left| \int_{t_{i-1}}^{t_i}\theta_{kl}(s)dW_s \right|}{\sqrt{2\delta \log \frac{1}{\delta}}} \leq \sqrt{M}, \ \ \ \ \sup_{i\in{1,...,n}}\frac{\left| \int_{t_{i-1}}^{t_i}\theta_{k'l'}(s)dB_s \right|}{\sqrt{2\delta \log \delta^{-1}}} \leq \sqrt{L},
\end{equation}
where $M, L$ are a non-negative constants. Therefor the last term in equation \eqref{eq:jumpproof} is
\begin{eqnarray}
\sqrt{n}\frac{ \left| \sum_{i=1}^nK_h(t_{i-1}-t)\Delta_{i-1}X^*_{kl}\Delta_{i-1}X^*_{k'l'}\mathbbm{1}_{\{ \Delta_{i-1}N\neq 0 \} } \right|}{\sqrt{\int_0^TK^2_h(s-t)\Omega_{kl,k'l'}(s)ds}} \leq \\
\sqrt{n}\frac{N_T \times \frac{\Lambda}{h} \times \sqrt{M} \times \sqrt{L} \times 2 \delta \log \delta^{-1} }{\sqrt{\int_0^TK^2_h(s-t)\Omega_{kl,k'l'}(s)ds}} \overset{P}{\to} 0. \nonumber
\end{eqnarray}
To prove multivariate convergence, given that we have asymptotic distribution of the elements of the covariance matrix, we employ Cram\'{e}r-Wold device:
\begin{lem}
For any real $a\in \mathbb{R}^{d\times d}$, as $\delta \to 0$
\begin{equation}
a^T\left( \frac{1}{\sqrt{\delta}}(\widehat{TCV}(t) - \int_0^TK_h(s-t)\Sigma(s)ds ) \right) \overset{\mathcal{L}}{\to} a^TN\left(0, \int_0^TK_h^2(s-t)\Omega(s)ds\right) a.
\end{equation}
\end{lem}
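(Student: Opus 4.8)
The plan is to establish the lemma by the Cram\'{e}r--Wold device, reducing the multivariate statement to the one--dimensional convergence already obtained in the first part of the proof of Theorem~\ref{theorem3}. Fix $a\in\mathbb{R}^{d\times d}$, write $\mathrm{vec}(a)$ for its stacked vector, and set $S_\delta := \sum_{k,l}a_{kl}\,\delta^{-1/2}\bigl(\widehat{TCV}_{kl}(t)-\int_0^T K_h(s-t)\Sigma_{kl}(s)\,ds\bigr)$, which is the scalar $a^{\!\top}\delta^{-1/2}(\widehat{TCV}(t)-\int_0^TK_h(s-t)\Sigma(s)\,ds)a$ appearing in the statement (read as $\mathrm{vec}(a)^{\!\top}$ applied to the vectorised matrix). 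Since convergence of every such projection $S_\delta$ to $\mathcal{N}\!\bigl(0,\sum_{k,l,k',l'}a_{kl}a_{k'l'}\int_0^T K_h^2(s-t)\Omega_{kl,k'l'}(s)\,ds\bigr)$ implies joint convergence in distribution of $\delta^{-1/2}(\widehat{TCV}(t)-\int_0^TK_h(s-t)\Sigma(s)\,ds)$ to the claimed $d^2$--variate Gaussian, it suffices to prove this scalar CLT for an arbitrary fixed $a$.

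For the scalar statement I would first peel off the jump contribution exactly as in \eqref{eq:jumpproof}: on each interval containing no jump the thresholded increment coincides with the diffusion increment $\Delta_{i-1}X^*$, and by Assumption~\ref{assump6} together with the L\'{e}vy modulus--of--continuity bound for the time--changed Brownian integrals $\int_{t_{i-1}}^{t_i}\theta_{kl}(s)\,dW_s$, the indicator $\mathbbm{1}_{\{\lVert\Delta X_{t_{i-1}}\rVert\le dr(\delta)\}}$ equals $\mathbbm{1}_{\{\Delta_{i-1}N=0\}}$ on an event of probability tending to one, uniformly in $i$. Hence $S_\delta=\sum_{k,l}a_{kl}\,\delta^{-1/2}\bigl(\sum_i K_h(t_{i-1}-t)\Delta_{i-1}X^*_k\Delta_{i-1}X^*_l-\int_0^T K_h(s-t)\Sigma_{kl}(s)\,ds\bigr)-R_\delta$, and the chain of inequalities displayed after \eqref{eq:jumpproof}---finiteness of $N_T$, the bound $|K_h|\le\Lambda/h$, and the $2\delta\log\delta^{-1}$ rate from the modulus of continuity---gives $R_\delta\overset{P}{\to}0$.

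It then remains to treat the continuous part, which is precisely the linear combination analysed in the proof of Theorem~\ref{theorem1} applied to the continuous semimartingale $X^*$: its mean is $o(1)$ after the drift--negligibility argument, its variance converges to $\sum_{k,l,k',l'}a_{kl}a_{k'l'}\int_0^T K_h^2(s-t)\Omega_{kl,k'l'}(s)\,ds$ via the covariance computation combined with Lemma~\ref{lemma1} and Riemann integration, and the Lyapunov condition for the Lindeberg--Feller CLT for triangular arrays holds because the normalised integrated variances $\widehat{\Gamma}_{ki}$ are bounded away from $0$ and $\infty$ by Assumption~\ref{assump3}. Combining this CLT with $R_\delta\overset{P}{\to}0$ via Slutsky's theorem yields the scalar limit for $S_\delta$, and Cram\'{e}r--Wold completes the proof.

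The main obstacle is the threshold step. One must argue that, with probability tending to one and uniformly over $i=1,\dots,n$, the cutoff $dr(\delta)$ simultaneously (i) retains every jump--free increment---this needs $r(\delta)$ to dominate the uniform Brownian modulus of continuity $\sqrt{2\delta\log\delta^{-1}}$, which is exactly what Assumption~\ref{assump6} provides---and (ii) discards every increment that straddles a jump, which uses the finite--activity structure so that the almost surely finitely many jump sizes on $[0,T]$ are bounded below away from zero. Once this equivalence of indicators is in force, everything else is a reassembly of the covariance and CLT computations already carried out for Theorem~\ref{theorem1}, so the remaining work is bookkeeping rather than new probabilistic estimates.
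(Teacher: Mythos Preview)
Your proposal is correct and follows the same Cram\'er--Wold strategy as the paper. You supply considerably more detail than the paper's one-line proof: you explicitly peel off the jump part via the threshold/indicator equivalence and then re-run the Lindeberg--Feller argument of Theorem~\ref{theorem1} for the arbitrary linear combination $\sum_{k,l}a_{kl}(\cdots)$, whereas the paper simply asserts that the result ``follows from the univariate case'' with the (inaccurate) remark that the centred components are independent and identically distributed. Your version is the one that actually justifies the step; the paper's proof sketch glosses over the fact that marginal convergence of each $(k,l)$ component does not by itself give convergence of linear combinations, which is precisely what your invocation of the full triangular-array CLT from Section~\ref{profTh1:AsympNorm} supplies.
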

\begin{proof}
This follows from univariate case, since $a^{kl}\left( \frac{1}{\sqrt{\delta}}(\widehat{TCV}_{kl}(t) - \int_0^TK_h(s-t)\Sigma_{kl}(s)ds)\right)$ for $k,l=1,...,d$ are independent and identically distributed with mean $0$ and variance $a^{kl}a^{k'l'}\int_0^TK_h^2(s-t)\Omega_{kl,k'l'}(s)ds$.
\end{proof}

\section{Lemmas} \label{proofLemma}
We rewrite the Lemma 6 and Lemma 7 from Kristensen \citeyearpar{Kristensen} in terms of the components of covariance matrix.
\begin{lem} \label{lemma1}
Under Assumption \ref{assump2} and Assumption \ref{assump3}, we have for every  $k,l=1, \cdots, d$
\begin{subequations}
\begin{empheq}{align}
&(i)\ \  \sum_{i=1}^n K_h(t_{i-1}-\tau)\int_{t_{i-1}}^{t_i} \Sigma_{kl}(s)ds = \int_0^TK_h(s-\tau)\Sigma_{kl}(s)ds + o(\delta)\bar{K}_{1},  \nonumber \\
&(ii)\ \  \delta^{-1}\sum K_h^2(t_{i-1}-\tau)\left( \int_{t_{i-1}}^{t_i} \Sigma_{kl}(s)ds \right)^2 = \int_0^T K_h^2(s-\tau)\Sigma_{kl}^2(s)ds + o(1)\times \bar{K}_0 \nonumber \\
& \ \ \ \ \ \ \ \ \ \ \ \ \ \ \ \ \ \ \ \ \ \ \ \ \ \ \ \ \ \ \ \ \ \ \ \ \ \ \ \ \ \ \ \ \ \ \ \ \ \ \ \ \ \ \ \ \ \ + O(\delta)\times \bar{K}_{1} \nonumber
\end{empheq}
\end{subequations}
uniformly over $\tau \in [0,T]$, as $\delta \to 0$.
\end{lem}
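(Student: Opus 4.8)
The statement is the componentwise transcription of Lemmas~6 and~7 of Kristensen \citeyearpar{Kristensen}, and my plan is to reduce it to those scalar results. Fix a pair $(k,l)$ and regard $s \mapsto \Sigma_{kl}(s) = \theta_{kl}^{aa}(s)$ as a single scalar c\`{a}dl\`{a}g process; it then suffices to check that this process meets the scalar hypotheses behind Kristensen's lemmas, namely a uniform bound on $[0,T]$ together with a local-oscillation control of the form $\delta\sum_{i=1}^n \sup_{s\in[t_{i-1},t_i]}|\Sigma_{kl}(s)-\Sigma_{kl}(t_{i-1})| = o(1)$. The uniform bound on $|\Sigma_{kl}|$ follows from Assumption~\ref{assump3} (which bounds the diagonal entries $\Sigma_{kk}$) together with Cauchy--Schwarz. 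For the oscillation control I would argue that Assumption~\ref{assump2} controls the oscillation of every quadratic quantity $\Omega_{kk',ll'}$; since each $\Omega_{kk',ll'}$ is a sum of products of entries of $\Sigma$ and, by Assumption~\ref{assump3}, the diagonal entries are bounded away from $0$ and $\infty$, the oscillation control passes first to the products (e.g.\ $\Sigma_{kk}\Sigma_{ll}$ and $\Sigma_{kl}^2$) and then, upon dividing out the lower-bounded factor, to the individual entries $\Sigma_{kl}$. With these scalar hypotheses in hand, part~(i) is Lemma~6 of Kristensen \citeyearpar{Kristensen} and part~(ii) is Lemma~7.

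For a self-contained argument I would proceed as follows. For (i), write $\int_0^T K_h(s-\tau)\Sigma_{kl}(s)\,ds = \sum_{i=1}^n\int_{t_{i-1}}^{t_i}K_h(s-\tau)\Sigma_{kl}(s)\,ds$, so that the left-hand side minus the integral equals $\sum_{i=1}^n\int_{t_{i-1}}^{t_i}[K_h(t_{i-1}-\tau)-K_h(s-\tau)]\Sigma_{kl}(s)\,ds$; a first-order expansion of $K_h$, using $\sup_u|K_h'(u)|\le h^{-2}\bar{K}_1$ from Assumption~\ref{assump5}(a) and the uniform bound on $|\Sigma_{kl}|$, bounds this by $O(\delta)\bar{K}_1$ uniformly in $\tau$ (which already yields the $o(\sqrt{\delta})$ actually used in the main text), and the sharpening to $o(\delta)\bar{K}_1$ comes from replacing the crude bound on $\Sigma_{kl}$ by $\Sigma_{kl}(t_{i-1})+[\Sigma_{kl}(s)-\Sigma_{kl}(t_{i-1})]$ and absorbing the leftover oscillation through Assumption~\ref{assump2}. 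For (ii), I would set $\int_{t_{i-1}}^{t_i}\Sigma_{kl}(s)\,ds = \delta\,\Sigma_{kl}(t_{i-1})+e_i$ with $|e_i|\le\delta\,\omega_i$, $\omega_i := \sup_{s\in[t_{i-1},t_i]}|\Sigma_{kl}(s)-\Sigma_{kl}(t_{i-1})|$, square, and multiply by $\delta^{-1}K_h^2(t_{i-1}-\tau)$, obtaining three groups of terms: the main term $\delta\sum_i K_h^2(t_{i-1}-\tau)\Sigma_{kl}^2(t_{i-1})$, a Riemann sum converging to $\int_0^T K_h^2(s-\tau)\Sigma_{kl}^2(s)\,ds$ with error $o(1)\bar{K}_0+O(\delta)\bar{K}_1$ (the same expansion as in (i), now applied to the c\`{a}dl\`{a}g function $K_h^2(\cdot-\tau)\Sigma_{kl}^2(\cdot)$); a cross term bounded by $2h^{-2}\bar{K}_0^2\sup_{[0,T]}|\Sigma_{kl}|\cdot\delta\sum_i\omega_i = o(1)$ by Assumption~\ref{assump2}; and a remainder bounded by $h^{-2}\bar{K}_0^2\,\delta\sum_i\omega_i^2 \le h^{-2}\bar{K}_0^2(\sup_i\omega_i)\,\delta\sum_i\omega_i = o(1)$, since $\sup_i\omega_i\to0$ by right-continuity of $\Sigma_{kl}$ and $\delta\sum_i\omega_i = o(1)$ by Assumption~\ref{assump2}.

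The main obstacle here is bookkeeping rather than depth. First, every estimate must be made uniform in $\tau\in[0,T]$; this works out because all the bounds above involve only the sup-norm constants $\bar{K}_0,\bar{K}_1$ of the kernel and never the location $\tau$ itself. Second, and more delicate, is the transfer of the oscillation control from the quadratic $\Omega$-quantities appearing in Assumption~\ref{assump2} to the individual covariance entries $\Sigma_{kl}$ --- this is precisely the place where the lower bound in Assumption~\ref{assump3} is indispensable, and it is why the lemma is most cleanly obtained as a translation of Kristensen's Lemmas~6 and~7 rather than reproved from scratch.
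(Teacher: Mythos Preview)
Your approach is essentially the paper's own: the paper's entire proof of this lemma is the one-line citation ``See Kristensen \citeyearpar{Kristensen} Lemma~6,'' and your reduction to the scalar result is exactly that, supplemented by a self-contained sketch the paper does not attempt. One small bibliographic point: the paper attributes \emph{both} parts (i) and (ii) to Kristensen's Lemma~6 (reserving Lemma~7 for the shrinking-bandwidth analogue stated here as Lemma~\ref{lemma2}), whereas you split the attribution between Lemmas~6 and~7 --- worth double-checking against Kristensen's numbering, but it does not affect the mathematics.
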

\begin{proof}
See Kristensen \citeyearpar{Kristensen} Lemma 6. 
\end{proof}

\begin{lem} \label{lemma2}
Under Assumption \ref{assump4} $a$, $b$ and Assumption \ref{assump5}, uniformly over $\tau \in [a, T-a]$, as $\delta, h, a/h \to 0$ we have:
\begin{subequations}
\begin{empheq}{align} \nonumber
&(i)\ \  \sum_{i=1}^n K_h(t_{i-1}-\tau)\int_{t_{i-1}}^{t_i} \Sigma_{kl}(s)ds = \Sigma_{kl}(\tau) + h^{m+\gamma}\mathcal{K}(\tau,0)\int_{\mathbb{R}}K(z)z^{m+\gamma}dz +  \\ \nonumber
&  \ \ \ \ \ \ \ \ \ \ \ \ \ \ \ \ \ \ \ \ \ \ \ \ \ \ \ \ \ \ \ \ \ \ \ \ \ \ \ \ \ \ \ \ \ \ \ O\left(\frac{\delta}{h}\right) + o(h^{m+\gamma}), \\ \nonumber
&(ii)\ \  \sum K_h^2(t_{i-1}-\tau)\left( \int_{t_{i-1}}^{t_i} \Sigma_{kl}(s)ds \right)^2 = \frac{\delta}{h}\Sigma_{kl}^2(\tau)\int_{\mathbb{R}} K^2(z)dz + O\left(\frac{\delta^{1+\gamma}}{h}\right) +O\left(\frac{\delta^2}{h^2}\right).  
\end{empheq}
\end{subequations}
\end{lem}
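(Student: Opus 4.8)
The plan is to reduce both displays to the scalar smoothing estimates of Kristensen \citeyearpar{Kristensen} (Lemmas 6 and 7), applied one matrix entry at a time. By Assumption \ref{assump4} each map $s\mapsto\Sigma_{kl}(s)$ lies in $C^{m,\gamma}[0,T]$, so it is a scalar function of exactly the type treated there; inserting $f=\Sigma_{kl}$, together with the kernel conditions of Assumption \ref{assump5} (unit mass, vanishing moments through order $r-1$, polynomial tail decay, bounded first derivative), yields (i) and (ii). For self-containedness I would reconstruct the scalar argument, which splits into a Riemann-sum step and a change-of-variables-plus-Taylor step.

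For part (i), I would first pass from the Riemann sum to the integral. Writing $K_h'(z)=h^{-2}K'(z/h)$, one has $\int_{\mathbb R}|K_h'(u-\tau)|\,du=h^{-1}\int_{\mathbb R}|K'(u)|\,du$, so bounding $|K_h(t_{i-1}-\tau)-K_h(s-\tau)|$ by the total variation of $K_h$ over $[t_{i-1},t_i]$ gives
\[
\Bigl|\,\sum_{i=1}^n K_h(t_{i-1}-\tau)\!\int_{t_{i-1}}^{t_i}\!\Sigma_{kl}(s)\,ds-\int_0^T\! K_h(s-\tau)\Sigma_{kl}(s)\,ds\,\Bigr|\le \delta\,\sup_s|\Sigma_{kl}(s)|\!\int_{\mathbb R}\!|K_h'(u-\tau)|\,du=O\!\bigl(\delta/h\bigr),
\]
uniformly in $\tau$. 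Next I would substitute $s=\tau+hz$, turning $\int_0^T K_h(s-\tau)\Sigma_{kl}(s)\,ds$ into $\int K(z)\Sigma_{kl}(\tau+hz)\,dz$ over a window that, since $a/h\to0$ and $K$ has a polynomial tail (Assumption \ref{assump5}(b)), may be enlarged to $\mathbb R$ up to a negligible term. Taylor-expanding $\Sigma_{kl}(\tau+hz)$ to order $m$, the moment conditions $\int z^jK(z)\,dz=0$ for $j=1,\dots,r-1$ with $r\ge m+\gamma$ kill the polynomial terms, while the H\"older remainder $\tfrac{(hz)^m}{m!}\bigl(\Sigma_{kl}^{(m)}(\tau+\xi hz)-\Sigma_{kl}^{(m)}(\tau)\bigr)$ integrates to $h^{m+\gamma}\mathcal K(\tau,0)\int K(z)z^{m+\gamma}\,dz+o(h^{m+\gamma})$; the $o$-term is produced by dominated convergence using $\int|z|^r|K(z)|\,dz<\infty$ and the slow variation and continuity at $0$ of the Lipschitz coefficient $\mathcal L(\tau,\cdot)$ in Assumption \ref{assump4}.

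For part (ii), I would start from $\int_{t_{i-1}}^{t_i}\Sigma_{kl}(s)\,ds=\delta\,\Sigma_{kl}(t_{i-1})+O(\delta^{1+\gamma})$, immediate from the H\"older bound in Assumption \ref{assump4}, whence $\bigl(\int_{t_{i-1}}^{t_i}\Sigma_{kl}\bigr)^2=\delta^2\Sigma_{kl}^2(t_{i-1})+O(\delta^{2+\gamma})$. Summing against $K_h^2(t_{i-1}-\tau)$ and using $\sum_i K_h^2(t_{i-1}-\tau)=O(h^{-2})$ absorbs the cross term into $O(\delta^{1+\gamma}/h)$ once a factor $\delta/h$ is extracted; the leading piece is $\delta$ times a Riemann sum for $\delta\int_0^TK_h^2(s-\tau)\Sigma_{kl}^2(s)\,ds$ with approximation error $O(\delta^2/h^2)$ (the same total-variation estimate, now applied to $K_h^2$), and the substitution $s=\tau+hz$ gives $\int_0^TK_h^2(s-\tau)\Sigma_{kl}^2(s)\,ds=h^{-1}\Sigma_{kl}^2(\tau)\int_{\mathbb R}K^2(z)\,dz+o(h^{-1})$ by continuity of $\Sigma_{kl}$. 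Collecting the pieces gives the claimed expansion.

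The step I expect to be the real obstacle is not any single estimate but the bookkeeping of the error orders \emph{uniformly} over $\tau\in[a,T-a]$, and in particular the careful handling of the slowly varying coefficient $\mathcal L(\tau,\cdot)$ so that the limiting value $\mathcal K(\tau,0)$ appearing in the statement can be pulled out of the remainder integral by dominated convergence under $r\ge m+\gamma$; the uniformity itself is not problematic because $\sup_s|\Sigma_{kl}(s)|$, $\bar K_0$ and $\bar K_1$ are $\tau$-free and the truncation of the $z$-integral to $\mathbb R$ is controlled uniformly by $a/h\to0$ together with Assumption \ref{assump5}(b). Since nothing here goes beyond Kristensen's scalar computation, in the final write-up the cleanest route is simply to state the coordinatewise reduction and invoke Lemmas 6 and 7 of Kristensen \citeyearpar{Kristensen}.
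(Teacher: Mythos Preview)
Your proposal is correct and matches the paper's approach exactly: the paper's proof of this lemma is simply ``See Kristensen \citeyearpar{Kristensen} Lemma 7,'' i.e., the coordinatewise reduction to the scalar case that you describe in your final sentence. Your reconstruction of the scalar argument is a faithful expansion of that reference; the only slip is the crude bound $\sum_i K_h^2(t_{i-1}-\tau)=O(h^{-2})$, which should be $O\bigl(1/(\delta h)\bigr)$ (it is a Riemann sum for $\int K_h^2\asymp h^{-1}$), but this is exactly what is needed to land on the stated $O(\delta^{1+\gamma}/h)$ cross term, so the conclusion is unaffected.
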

\begin{proof}
See Kristensen \citeyearpar{Kristensen} Lemma 7. 
\end{proof}

\end{appendices}

\end{document}